\documentclass[submission,copyright,creativecommons]{eptcs}
\providecommand{\event}{EPTCS 2025} 

\usepackage{graphicx}
\usepackage{amsmath}

\usepackage{amssymb}
\usepackage{rotating}
\usepackage{floatpag}
\rotfloatpagestyle{empty}
\usepackage{amsthm}
\usepackage{graphicx}
\usepackage[tone,extra]{tipa}
\usepackage{flagderiv}

\usepackage{amsfonts}
\usepackage{amssymb}
\usepackage{amsmath}
\usepackage{epsfig}
\usepackage[english]{babel}

\begin{document}
\title{An Unconventional View on Beta-Reduction\\ in Namefree Lambda-Calculus}

\author{Rob Nederpelt
\institute{Eindhoven University of Technology\\ Eindhoven, The Netherlands}
\institute{Dept.\ of Math.\ and Comp.\ Sc.\ }
\email{r.p.nederpelt@tue.nl}
\and
Ferruccio Guidi
\institute{University of Bologna\\ Bologna, Italy}
\institute{Dept.\ of Comp.\ Sc.\ and Engin.\ (DISI)}
\email{ferruccio.guidi@unibo.it}
}
\def\titlerunning{A View on Namefree Beta-Reduction}
\def\authorrunning{Rob Nederpelt and Ferruccio Guidi}

\maketitle












\theoremstyle{plain}
\newtheorem{The}{Theorem}[section]
\newtheorem{Ste}[The]{Theorem}
\newtheorem{Cor}[The]{Corollary}
\newtheorem{Lem}[The]{Lemma}
\newtheorem{Not}[The]{Notation}
\newtheorem{Rem}[The]{Remark}
\newtheorem{Rems}[The]{Remarks}
\newtheorem{For}[The]{Formula}
\newtheorem{Rul}[The]{Rule}
\newtheorem{Con}[The]{Convention}
\newtheorem{Note}[The]{Note}
\newtheorem{Exp}[The]{Explanation}

\theoremstyle{definition}
\newtheorem{Def}[The]{Definition}
\newtheorem{Exa}[The]{Example}
\newtheorem{Exas}[The]{Examples}
\newtheorem{Proc}[The]{Procedure}




\begin{abstract}
Terms in the lambda-calculus can be represented as planar trees decorated with symbols for `abstraction' and `application', and having variables as leaves. In this paper, we concentrate on the {\em branches\/} of such trees, rather than on the trees themselves. We reformulate several well-known notions of beta-reduction in this view. In a natural manner, this reconsideration eventually leads to a new form of beta-reduction, being {\em expanding\/} -- in the sense that the reduction of term $t_1$ to term $t_2$ entails that the tree of~$t_1$ is a subtree of the tree of $t_2$.
\end{abstract}












\def\res{\mathop{\vert\grave{}}}
\def\rres{\mathop{\vert\hbox}}
\def\rrightarrow{\rightarrow \hspace{-.8em}
\rightarrow}
\def\ssubset{\subset \hspace{-0.5 em} \subset}
\def\insem{[ \hspace{-.23em} |}
\def\outsem{] \hspace{-.33em} |}
\def\ovi{\overline{\imath}}
\def\ovj{\overline{\jmath}}
\def\Rar{\Rightarrow}
\def\rar{\rightarrow}
\def\Lar{\Leftarrow}
\def\lar{\leftarrow}
\def\LRar{\Leftrightarrow}
\def\lrar{\leftrightarrow}
\def\ttT{{\tt True}~}
\def\ttF{{\tt False}~}
\def\isval{\hspace{.4ex} \stackrel{\it  val}{= \hspace{-.5ex} =} \hspace{.4ex}}
\def\grval{\hspace{.4ex} \mid \hspace{-.5ex} \stackrel{\it  val}{= \hspace{-.5ex} =} \hspace{.4ex}}
\def\klval{\hspace{.4ex} \stackrel{\it  val}{= \hspace{-.5ex} =} \hspace{-.5ex} \mid \hspace{.4ex}}
\def\rrar{\twoheadrightarrow}
\def\red{\rightarrow_{\beta}}
\def\rred{\twoheadrightarrow_{\beta}}
\def\led{\leftarrow_{\beta}}
\def\lled{\twoheadleftarrow_{\beta}}
\def\ded{\stackrel{\Delta}{\rightarrow}}
\def\dded{\stackrel{\raisebox{-0.5ex}{\scriptsize{$\Delta$}}}{\twoheadrightarrow}}
\def\cded{\stackrel{\Delta~}{\twoheadrightarrow_{\beta}}}
\def\conv{=_{\beta}}
\def\donv{\stackrel{\Delta}{=}}
\def\cdonv{\stackrel{\Delta~}{=_{\beta}}}
\def\cond{{\tt con}_{\Delta}}
\def\rightarrowd{\rightarrow_\delta}
\def\rrard{\rrar_\delta}
\def\rrarbd{\rrar_{\beta\delta}}

\def\crar{\stackrel{\bf c~~}{\rightarrow_{\beta_1}}}
\def\crrar{\stackrel{\bf c~~}{\rrar_{\beta_1}}}

\def\frar{\stackrel{\bf f~~}{\rightarrow_{\beta_1}}}
\def\frrar{\stackrel{\bf f~~}{\rrar_{\beta_1}}}

\def\undL{\underline{\rmL}}

\def\lmea{||\,}
\def\rmea{\,||_{\Delta}}
\def\seml{[\![}
\def\semr{]\!]}

\def\bbN{\mathbb{N}}
\def\bbZ{\mathbb{Z}}
\def\bbR{\mathbb{R}}
\def\bbNp{\bbN^+}
\def\isN{=_{\bbN}}
\def\isNp{=_{\bbN^+}}
\def\nisN{\not=_{\bbN}}
\def\nisNp{\not=_{\bbN^+}}

\def\lobar{\lambda \underline{\omega}}
\def\lamC{\lambda {\rm C}}
\def\lamD{\lambda {\rm D}}
\def\lamDu{\lambda {\rm D}_{0}}
\def\lamDn{\lambda {\rm D}_{\it n}}
\def\lamDp{\lambda {\rm D}^+}
\def\lamCp{\lambda {\rm C}^+}
\def\lamDz{\lambda {\rm D}_{0}}
\def\lamP{\lambda {\rm P}}

\def\subZ{_{\bbZ}}

\def\isdef{\stackrel{d}{=}}
\def\ddef{\,:=\,}
\def\pplus{\,+\,}
\def\ccirc{~\circ~}
\def\vvdash{~\vdash~}
\def\bott{\bot \hspace{-1.4ex} \bot}
\def\smallbott{\bot \hspace{-1.2ex} \bot}
\def\mmin{\text -}
\def\noi{\noindent\hspace{1.2ex}}
\def\noin{\noindent--\hspace{1.2ex}}
\def\nin{\noindent}
\def\inn{\!<\hspace{-1em}-\,}
\def\ins{\!<\hspace{-1em}-}
\def\inR{\!<\hspace{-4.2mm}-\,}
\def\inr{\!<\hspace{-3.5mm}-\,}
\def\inst{\stackrel{\raisebox{-3em}{--}}{<}}
\newcommand{\ina}{~\varepsilon~}

\def\inw{\in^{\wedge}}
\def\inv{\in_{\vee}}
\def\inwv{\in^{\wedge}_{\vee}}
\def\Tcar{{\cal T}^{\it car}}
\def\Tcarp{{\cal T}^{\it car+}}
\def\Tfre{{\cal T}^{\it fre}}
\def\Texp{{\cal T}^{\it exp}}
\def\Pexp{{\cal P}_{\it exp}}

\def\rmA{{\rm A}}
\def\rmL{{\rm L}}
\def\rmS{{\rm S}}
\def\rmP{{\rm P}}
\def\bft{{\bf t}}
\def\bfs{{\bf s}}
\def\bfu{{\bf u}}
\def\bfv{{\bf v}}


\def\sup{}

\newcommand{\be}{\beta}
\newcommand{\la}{\lambda}
\newcommand{\de}{\delta}
\newcommand{\si}{\sigma}
\newcommand{\ze}{\zeta}
\newcommand{\zei}{\zeta_{\star}}
\newcommand{\up}{\upsilon}
\newcommand{\eps}{\in}
\newcommand{\body}{{\tt body}}
\newcommand{\loc}{\tt loc}
\newcommand{\glo}{\tt glo}
\newcommand{\da}{\dagger}
\newcommand{\lamdot}{ \, . \,\, }
\newcommand{\ao}{\{}
\newcommand{\as}{\}}
\newcommand{\nat}{{\it nat}}
\newcommand{\bool}{{\it bool}}
\newcommand{\omegab}{{\underline{\omega}}}
\newcommand{\llabel}[1]{\label{#1}}

\newcommand{\cindex}[2]{\dindex{${#1}$,
Fig.\ref{#2}}}

\newcommand{\nindex}[1]{\index{author}{#1}}
\newcommand{\dindex}[1]{\index{definition}{#1}}
\newcommand{\tindex}[1]{\index{technot}{#1}}
\newcommand{\sindex}[1]{\index{subject}{#1}}

\newcommand{\textps}{\textprimstress}
\newcommand{\textss}{\textsecstress}
\newcommand{\calD}{{\cal D}}
\newcommand{\calR}{{\cal R}}
\newcommand{\calT}{{\cal T}}
\newcommand{\calE}{{\cal E}}
\newcommand{\calED}{{\calE_{\lambda D}}}
\newcommand{\calEDn}{{\calE_{\lamDn}}}
\newcommand{\calEDu}{{\calE_{\lamDu}}}
\newcommand{\calEDp}{\calE_{\lamDp}}
\newcommand{\calB}{{\cal B}}
\newcommand{\calV}{{\cal V}}
\newcommand{\calS}{{\cal S}}
\newcommand{\calP}{{\cal P}}

\newcommand{\calAn}{{\cal A}_0}
\newcommand{\calAo}{{\cal A}_1}
\newcommand{\calAi}{{\cal A}_i}
\newcommand{\ovx}{\overline x}
\newcommand{\ovy}{\overline y}

\newcommand{\bif}{{\sf if}}
\newcommand{\Zero}{{\sf Zero}}
\newcommand{\bthen}{{\sf then}}
\newcommand{\belse}{{\sf else}}
\newcommand{\Mult}{{\sf Mult}}
\newcommand{\Pred}{{\sf Pred}}

\newcommand{\List}{{\sf List}}
\newcommand{\IN}{{\mathbb N}}

\newcommand{\sss}{\scriptstyle\;}

\newcommand{\ovl}{\overline}

\newcommand{\ttp}{\tt p}
\newcommand{\ttTy}{\tt T}
\newcommand{\ttA}{\tt A}
\newcommand{\ttl}{\tt l}

\newcommand{\renlab}{\renewcommand{\labelitemi}{$-$}}
\newcommand{\listbegin}{\vspace{-1ex}\begin{itemize}\renlab}
\newcommand{\listend}{\end{itemize}\vspace{-1ex}}


\noindent{\em Dedicated to Stefano Berardi on the occasion of his 64th birthday in honor of his extensive and diverse research in theoretical computer science.}

\section{Preliminary remarks}
\label{SecPre}

\subsection{Introduction}
\label{SecInt}

It is well known that practical implementations of the $\lambda$-calculus turn $\alpha$-equivalence into syntactic equality by representing bound variable occurrences with depth indices or level indices, being positive numbers, rather than with names (\cite{deB72}). Therefore, the resulting systems are termed namefree as opposed to namecarrying.
Generally speaking, $\beta$-reduction involves replacing the occurrences of the bound variables in the body of the function with copies of the argument of the function, and, in this scenario, the indexes occurring in such copies may need an update to prevent captures.
Experience shows that this update, known as lift according to a well-established terminology, is a time consuming operation \cite[Appendix A2]{Gui09} that, precisely, computing machines strive to avoid (\cite{Klu05}). In a family of systems originating from \cite{deB78b} and $\lambda\sigma$ (\cite{Aba91}), $\beta$-reductions do not apply the update immediately, but store it in the copied terms by adding specific constructs to them. Thus, a computation can delay updates at will or apply them whenever is the case.



One of the namefree systems we present in this article
is based just on $\beta$-reduction at a distance,
an extension of $\beta$-reduction that has been introduced in (\cite{Ned73}).
Such a reduction relation allows, for example, not only \mbox{$\beta$-reduction} $K \equiv (\lambda x y \lamdot L) M N \rightarrow (\lambda y \lamdot L[x := M])N$, but also the variant $K \rightarrow (\lambda x \lamdot L[y := N])M$. See also (\cite{Reg92,Reg94}). 

In deviation of the usual notation, lambda terms 
are presented as the set of complete branches belonging to rooted trees that are composed of vertices and labeled edges. With this representation, one obtains a transparent view on matching pairs of abstraction and application, each of which pairs may generate a beta-reduction at a distance. This transparency considerably facilitates our discussion.

\subsection{About the tree structure of lambda terms}
\label{SecAbo}

The main motivation behind this paper is sheer curiosity. It is obvious that terms in (untyped or typed) $\lambda$-calculus have a tree structure, decorated with abstraction- and application-symbols -- say $\lambda$ and $@$, respectively. In the {\em namecarrying\/} versions, the leaves are variables, but in the {\em namefree\/} case these are positive natural numbers. See Figure~\ref{FigAboNam}\,$(i)$ for the namefree tree corresponding to the (untyped) term

\smallskip

$(\ast)$ $(\lambda ((\lambda   \lambda \, 2 \, 2) \lambda  2 ) 1) \lambda  1$,

\smallskip

\noindent this being the nameless version of the namecarrying term

\smallskip

$(\lambda x \lamdot (\lambda y \lamdot \lambda z \lamdot y \, y)(\lambda u \lamdot x)x)\lambda v \lamdot v$.

\smallskip

In this paper, we generally investigate this {\em namefree tree format\/} (first described in (\cite{deB72}); see also (\cite{deB77})).

\smallskip

 Our work has been inspired by the following question:

\begin{quote} {\em (Branch focus). Is it feasible to describe various reduction relations by concentrating not on the {\bf trees\/} but on the {\bf branches\/} of the trees.}
\end{quote}

\noindent For a linear representation of all branches in Figure~\ref{FigAboNam}\,$(i)$, see Example~\ref{ExaAboTre}, left hand side, 
In the present paper, we try to answer this question for various well-known forms of beta-re\-duc\-tion. This leads us to a new form of beta-reduction that has the property of expanding {\em\/} the tree under consideration, without any losses.

\smallskip


\begin{figure}[ht]

\begin{picture}(400,130)(-20,0)

\put(13,118){$(i)$}

\put(72,118){$@$}
\put(73,118){\line(-1,-1){10}}
\put(79,118){\line(1,-1){10}}
\put(58,104){$\lambda$}
\put(60.5,103){\line(0,-1){6}}
\put(90.5,103){\line(0,-1){5.5}}
\put(88,104){$\lambda$}
\put(57,88){$@$}
\put(89,89){$1$}
\put(42,73){$@$}
\put(73,74){$1$}
\put(28,59){$\lambda$}
\put(58,59){$\lambda$}
\put(28,44){$\lambda$}
\put(58,44){$2$}
\put(27,28){$@$}
\put(13,14){$2$}
\put(43,14){$2$}

\put(58,88){\line(-1,-1){10}}
\put(64,88){\line(1,-1){10}}

\put(43,73){\line(-1,-1){10}}
\put(49,73){\line(1,-1){10}}

\put(31,58){\line(0,-1){5.5}}
\put(60.5,58){\line(0,-1){5.5}}

\put(31,42.5){\line(0,-1){6}}

\put(28,28){\line(-1,-1){10}}
\put(34,28){\line(1,-1){10}}

\put(15,-3){\rm namefree tree of $(\ast)$}

\put(123,118){$(ii)$}
\put(113,119){\line(0,-1){100}}

\put(156,126){\line(0,-1){10}}

\put(151.5,108){$@$}

\put(152,107){\line(-1,-1){10}}
\put(161.5,107){\line(1,-1){10}}

\put(132,85){\rm application}

\put(156,66){\line(0,-1){10}}

\put(153,48){$\lambda$}

\put(156,47){\line(0,-1){10}}

\put(134,25){\rm abstraction}

\put(184,108){$\rightarrow$}

\put(184,48){$\rightarrow$}

\put(225,126){\line(0,-1){14}}

\put(225,111){\circle*{4}}

\put(225,112){\line(-1,-1){15}}
\put(225,112){\line(1,-1){15}}

\put(240,96){\circle*{4}}

\put(240,96){\line(0,-1){14}}

\put(206,103){$\rmA$}

\put(236,103){$\rmS$}

\put(225,66){\line(0,-1){14}}

\put(225,51){\circle*{4}}

\put(225,52){\line(-0,-1){14}}



\put(215,41){$\rmL$}


\put(276,118){$(iii)$}
\put(266,119){\line(0,-1){100}}

\put(355,109){\circle*{4}}
\put(340,123){\circle*{4}}
\put(325,109){\circle*{4}}
\put(325,93){\circle*{4}}
\put(325,64){\circle*{4}}
\put(310,78){\circle*{4}}
\put(295,64){\circle*{4}}
\put(295,49){\circle*{4}}
\put(295,34){\circle*{4}}

\put(355,93){\circle*{4}}
\put(340,78){\circle*{4}}
\put(325,49){\circle*{4}}
\put(310,19){\circle*{4}}
\put(280,19){\circle*{4}}

\put(340,123){\line(-1,-1){15}}
\put(340,123){\line(1,-1){15}}

\put(355,108){\line(0,-1){27}}
\put(325,108){\line(0,-1){14}}
\put(340,78){\line(0,-1){12}}
\put(325,63){\line(0,-1){26}}
\put(310,19){\line(0,-1){12}}
\put(280,19){\line(0,-1){12}}

\put(325,93){\line(-1,-1){15}}
\put(325,93){\line(1,-1){13}}

\put(310,78){\line(-1,-1){15}}
\put(310,78){\line(1,-1){15}}

\put(295,63){\line(0,-1){30}}
\put(325,63){\line(0,-1){12}}

\put(295,35){\line(-1,-1){15}}
\put(295,35){\line(1,-1){15}}

\put(346,83){$1$}
\put(331,68){$1$}
\put(316,39){$2$}
\put(271,9){$2$}
\put(301,9){$2$}

\put(324,116){$\rmA$}
\put(349,116){$\rmS$}

\put(315,99){$\rmL$}
\put(345,99){$\rmL$}

\put(309,86){$\rmA$}
\put(334,86){$\rmS$}

\put(294,71){$\rmA$}
\put(319,71){$\rmS$}

\put(285,54){$\rmL$}
\put(315,54){$\rmL$}

\put(285,39){$\rmL$}

\put(278,26){$\rmA$}
\put(305,26){$\rmS$}

\put(272,-3){\rm namefree $\lambda$-tree of $(\ast)$}

\end{picture}
\caption{Namefree lambda trees; traditional and adapted}
\label{FigAboNam}
\end{figure}

Our first goal is to ensure that {\em the entire set of branches represents the tree from which it originates.\/} This brings about several issues having to be considered:

\smallskip

$(i)$ {\em sound tree reconstruction\/} ~ A tree as in Figure~\ref{FigAboNam}\,$(i)$ is {\em planar\/}, i.e., below every $@$ follow a {\em left\/}  and a {\em right\/} branch. However, in {\em branches\/} containing an $@$ there is no clue whether the original path in the tree went left or right.

We have chosen to repair this by adding a 'label' $\rmS$ (for `subterm') on top of the {\em right\/} branch, leaving the left branch unchanged. See Figure~\ref{FigAboNam}\,$(ii)$. For an easier representation, we use $\rmA$ instead of $@$, and $\rmL$ instead of $\lambda$. This will be done henceforth in the paper.

\smallskip

$(ii)$ {\em sound redex detection\/} ~ In the tree, an $@$ with a $\lambda$ immediately {\em left\/} below it, determines a redex. However, an $@$ with a $\lambda$ immediately {\em right\/} below it does {\em not\/} (cf Example~\ref{ExaAboTre} (3)).

The proper addition of labels $\rmS$, as described right now, ensures that the difference between `descending to the left' and `descending to the right' has been covered by whether, no or yes, there is an $\rmS$ between $@$ and $\lambda$.


\smallskip

$(iii)$ {\em type preparedness\/} ~ The standard procedure for the detection of the $\lambda$ binding a (numeric) variable $n$ in a tree is this: follow (upward) the branch ending in this $n$ and subtract 1 for every $\lambda$ met. The $\lambda$ where 1 changes to 0 is the binding $\lambda$ for $n$. (See Figure~\ref{FigAboNam}\,$(i)$ for examples.)

Preferably, our focus on branches should be appropriate for the extension to {\em typed\/} lambda calculus 
(cf (\cite{Bar92})). However, there is an annoying anomaly in the described search for the binding $\lambda$. To be precise: the count from $n$ to 0, upward along the branch, should {\em bypass\/} every $\lambda$ at a bifurcation that is approached from {\em right\/} below. The reason is that the subtree right below such a $\lambda$ (representing a type) does not contain numeric variables bound by this~$\lambda$. This has been explicitly noted in (\cite{deB87}, Section~2.) We illustrate this anomaly in Figure~\ref{FigAboBin}\,$(i)$.

\begin{figure}[ht]

\begin{picture}(400,77)(21,75)

\put(103,141){$(i)$}

\put(135,110){$\times!$}

\multiput(142,146)(0,-5){3}{\circle*{1.5}}
\multiput(147,125)(5,-5){3}{\circle*{1.5}}
\put(137.2,125.2){\line(-1,-1){10}}
\put(138,126.5){$\lambda$}
\put(123,108){$\lambda$}
\put(139,88.9){$1$}
\multiput(112.5,93)(0,-5){3}{\circle*{1.5}}

\put(122,107){\line(-1,-1){10}}
\put(131.5,107){\line(1,-1){10}}

\put(145,92.5){\circle{2}}
\multiput(145,97)(-3.3,3.3){5}{\circle{2}}
\multiput(129.5,113.3)(2.90,2.90){5}{\circle{2}}

\put(149,68){\rm typed abstraction}

\put(191,108){$\rightarrow$}

\put(217,141){$(ii)$}

\multiput(230,91)(0,-5){3}{\circle*{1.5}}
\multiput(260,146)(0,-5){4}{\circle*{1.5}}

\put(267.4,89.8){\circle{2}}
\multiput(267.4,94.1)(-3.1,3.1){4}{\circle{2}}
\multiput(251.5,109.5)(-3.1,3.1){4}{\circle{2}}
\put(245,121.1){\circle{2}}

\put(260,127){\line(-1,-1){30}}
\put(245,112){\line(1,-1){15}}
\put(260,127){\line(1,-1){15}}


\put(260,96){\line(0,-1){14}}

\put(231,106){$\rmL$}
\put(246,121){$\rmL$}

\put(253,106){$\rmS$}
\put(268,121){$\rmS$}
\put(261,86){$1$}


\put(230,96){\circle*{4}}
\put(260,126){\circle*{4}}
\put(260,96){\circle*{4}}
\put(275,111){\circle*{4}}



\end{picture}
\caption{The binding of a variable; traditional and adapted}
\label{FigAboBin}
\end{figure}
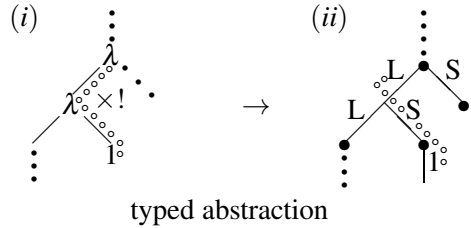

$(iv)$ {\em skipping a useless $@$-sign\/} ~ A similar remark as in $(iii)$ holds, in both the untyped and the typed cases, when not a $\lambda$, but an $@$ is approached from right below. Such an $@$ should actually be also skipped in the 'coding' of the branch, since this $@$ has no influence whatsoever.

This is awkward. Our solution to $(iv)$ (and also $(iii)$!) is to lower all labels $@$ and $\lambda$, in the sense that they become attached to the {\em edge left below\/} the original label (so not to the vertex). Moreover, we attach the new label $\rmS$ to the edge {\em right below\/} the bifurcation. Apart from that, we add an edge for every numeric variable, and attach the variable to this new edge.
See Figure~\ref{FigAboNam}\,$(ii)$ and also Figure~\ref{FigAboBin}\,$(ii)$.

For the skipping of an $@$-sign, see Example~\ref{ExaAboTre} (2), (3) and (4): compare the traditional branches and the adapted ones.

\begin{Exa}\label{ExaAboTre}
The trees in Figure~\ref{FigAboNam} have five branches. The lists of the labels, from root to leaf, along these branches, are the following. The branches of the tree are ordered from left to right.

\medskip

\begin{tabular}{ll|l}

 & \rule[-0.7em]{0em}{0.7em}In the traditional tree:~~ & ~~In the adapted tree: \\

(1) & $ @ \, \lambda \, @ \, @ \, \lambda \, \lambda \, @ \, 2 $
 & ~~$ \rmA \, \rmL \, \rmA \, \rmA \, \rmL \, \rmL \, \rmA \, 2 $ \\

(2) & $ @ \, \lambda \, @ \, @ \, \lambda \, \lambda \, @ \, 2 $
 & ~~$ \rmA \, \rmL \, \rmA \, \rmA \, \rmL \, \rmL \, \rmS \, 2 $ \\

(3) & $ @ \, \lambda \, @ \, @ \, \lambda \, 2 $
 & ~~$ \rmA \, \rmL \, \rmA \, \rmS \, \rmL \, 2 $ \\

(4) & $ @ \, \lambda \, @ \, 1 $
 & ~~$ \rmA \, \rmL \, \rmS \, 1 $ \\

(5) & $ @ \, \lambda \, 1 $
 & ~~$ \rmS \, \rmL \, 1 $
 \end{tabular}
\end{Exa}

\begin{Note}\label{NotAboIde}
Branches (1) and (2) are {\em identical\/} in the traditional tree, but {\em different\/} in the adapted tree.
\end{Note}

\begin{Note}\label{NotAboExp}
Another advantage of the tree representation is, that there is no need for an extra marker or another technical intervention to delimit a subterm, as is often required in the process of {\em explicit substitution} being executed on a {\em linear} presentation of a $\lambda$-term. For example, let's consider an explicit substitution operator, say $\Sigma$, that we want to `propel' one step forward through a linearly written \mbox{$\lambda$-term}: $\ldots \Sigma\,{\bf t_1}\, {\bf t_2} \ldots$. We assume that subterm ${\bf t_1}\,{\bf t_2}$ is written as function $\bf t_2$ {\em preceded\/} by argument $\bf t_1$; cf., (\cite{deB72}). Then the result of the propelling of $\Sigma$ could look like $\ldots  ({\Sigma \, \bf t_1 \,\sharp}) {\, \Sigma\, \bf t_2} \ldots$, in which the inserted symbol $\sharp$ delimits the subterm $\bf t_1$, so that the first copy of $\Sigma$ can halt in time (and becomes erased).
Cf., (\cite{Ned79}, p.\ 7) and (\cite{Ned80}, p.\ 5, 6).
\end{Note}

\subsection{Lambda trees and paths}\label{SecLam}

We give the name {\em $\lambda$-tree\/} to trees of lambda terms as exemplified in the {\em adapted\/} tree of Figure~\ref{FigAboNam}\,$(iii)$. See Definition~\ref{DefLamTre}, $(i)$, below. For $\lambda$-trees, we use the word {\em path\/} for a branch or a part of a branch and the word {\em num{\rm -}label\/} or {\em num{\rm -}variable\/} for a numeric variable.


\begin{Def}\label{DefLamTre}

$(i)$ A {\em lambda-tree\/} is a connected acyclic undirected graph, ranged over by $\bft$, $\bft'$, \ldots, constructed by the inductive definition below. Such a tree must be non-empty, rooted, and edge-labeled.

Let $\bft$, $\bft_1$ and $\bft_2$ be $\lambda$-trees and $n$ a positive natural number. Then also the following trees are $\lambda$-trees:

\begin{picture}(400,50)(-40,-10)

\put(35,10){\line(0,1){20}}
\put(27,17.5){$n$}
\put(35,30){\circle*{4}}

\put(97,17){$\rmL$}
\put(105,30){\circle*{4}}
\put(105,10){\line(0,1){20}}
\put(104,2){$\bft$}

\put(168,17){$\rmA$}
\put(190,30){\circle*{4}}
\put(170,10){\line(1,1){20}}
\put(168,2){$\bft_1$}

\put(206,17){$\rmS$}
\put(210,10){\line(-1,1){20}}
\put(208,2){$\bft_2$}

\end{picture}


\noindent ({\em Note:} For technical reasons we draw num-labeled edges with a `loose' end, not having a lower node. This makes it easy to extend a variable-labeled edge with a new lambda-tree, as we do in Section~\ref{SecNewLos}.)

$(ii)$ A {\em label\/} is one of $\rmA$, $\rmL$, $\rmS$ or any $n \in \bbNp$, ranged over by $\ell$,  \ldots.

$(iii)$ A {\em path\/} in a tree $\bft$  is a connected string of labeled edges occurring in $\bft$, recorded downwards. (A~path may identify only a {\em part\/} of a branch.) The paths are ranged over by $p$, $q$, \ldots.

\end{Def}

\begin{Lem}\label{LemLamDif}
Along different paths in a $\lambda$-tree, one finds different strings of labels.
\end{Lem}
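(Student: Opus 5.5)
The statement only makes sense for paths that start at the \emph{same} vertex, in particular for paths starting at the root (branches and their initial segments). Two paths starting at different vertices can of course carry the same labels: the two single edges labeled $\rmL$ directly below the $\rmA$-edge and the $\rmS$-edge in Figure~\ref{FigAboNam}\,$(iii)$ are an example. So I will prove the following: two different downward paths in a $\lambda$-tree $\bft$ that start at the same vertex carry different strings of labels. Fixing this reading is, I expect, the only real obstacle. Once it is fixed, the argument is a local observation followed by a first-divergence argument.

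\textbf{Step 1 (local distinctness).} By induction on the construction in Definition~\ref{DefLamTre}\,$(i)$, every vertex of a $\lambda$-tree has one of three kinds of downward edge sets:
\begin{itemize}
\item exactly one edge, labeled by some $n \in \bbNp$ (the loose-ended variable edge);
\item exactly one edge, labeled $\rmL$;
\item exactly two edges, labeled $\rmA$ and $\rmS$.
\end{itemize}
The three construction rules each create a new root with precisely one of these edge sets. They leave the inner vertices of the component trees $\bft$, $\bft_1$, $\bft_2$ unchanged, and these vertices satisfy the claim by the induction hypothesis. Hence distinct downward edges leaving the same vertex always carry distinct labels.

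\textbf{Step 2 (first divergence).} Let $p \neq q$ be downward paths from the same vertex $v$, with edges $e_1 \ldots e_k$ and $f_1 \ldots f_m$.

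If one is a proper prefix of the other, then $k \neq m$, so the label strings have different lengths and are different.

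Otherwise, let $i$ be the least index with $e_i \neq f_i$. Since the tree is acyclic and both paths go downward from $v$, the common prefix $e_1 \ldots e_{i-1}$ ends in a single vertex $w$. Both $e_i$ and $f_i$ are downward edges leaving $w$. By Step~1 their labels differ, so the two label strings differ at position $i$.

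This proves the lemma. In particular, distinct branches of $\bft$ have distinct label strings, which is what the branch focus of Section~\ref{SecAbo} requires (compare Note~\ref{NotAboIde}).
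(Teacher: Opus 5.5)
Your proof is correct and uses the same argument as the paper's: take the first position where the two paths diverge, observe that this position is a bifurcation, and conclude that the labels there are $\rmA$ and $\rmS$. Your version is more careful than the paper's in three ways. You make explicit that both paths must start at the same vertex, which is needed, as your two $\rmL$-edges in Figure~\ref{FigAboNam}\,$(iii)$ show. You justify by induction that only bifurcation vertices have two downward edges. And you treat the proper-prefix case, which the paper's argument skips.
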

{\em Proof\/} ~~  Let $p_1$ and $p_2$ be different paths in $\bft$. Find the leftmost position where the two paths deviate. This must be at a bifurcation. So, in that position, one label is an $\rmA$ and the other an $\rmS$. Hence, the strings of labels along $p_1$ and $p_2$ also differ. $\Box$

\smallskip

This lemma enables us to {\em identify\/} a path with its string of labels. See also Note~\ref{NotAboIde} and Definition~\ref{DefLamIde}. We give names to special types of paths.

\begin{Def}\label{DefLamPat}

Let ${\bf t}$ be a $\lambda$-tree and $p$ be a non-empty path in ${\bf t}$.
Notation: $p \in \bft$ and $p \not \equiv \varepsilon$.

$p$ is a {\em root path\/} of $\bft$ if $p$ starts in the root of $\bft$. Notation: $p \in^{\wedge} \bft$.

$p$ is a {\em leaf path\/} if it has a leaf as final label. Notation: $p \in_{\vee} \bft$.

$p$ is {\em complete\/} if it is both a root path and a leaf path. Notation: $p \in^{\wedge}_{\vee} {\bf t}$.
\end{Def}

Examples of complete paths: see Example~\ref{ExaAboTre}, right hand side.


\begin{Def}\label{DefLamIde}
Let $p_1$, \ldots, $p_n$ be all complete paths in a $\lambda$-tree ${\bf t}$. Then we  identify ${\bf t}$ with the set $\{p_1, \ldots, p_n\}$.
\end{Def}

\begin{Def}\label{DefLamLen}
$(i)$ The {\em lenght\/} $| p |$ of a path $p$ in the $\lambda$-tree $\bft$ is the number of labels (including num-labels) in $p$.

$(ii)$ The {\em {\rm L}-length\/} $\lVert p \lVert$ of a path $p \in \bft$ is the number of labels ${\rm L}$ that occur in~$p$.
\end{Def}

\smallskip

In {\it namefree\/} $\lambda$-calculus, the {\em binding of variables\/} -- as known from namecarrying lambda-calculus -- is expressed by the {\em value\/} of the num-label $n$ at the end of a complete path. This label $n$ represents a variable. The procedure for establishing the bindings between labels $\rmL$ and num-variables  has been discussed already in the previous section, under the heading {\em type-preparedness\/}. We can now give a simple definition of binding. (Note: path $p \, \rmL \, q \, n$ is the concatenation of path $p$, label $\rmL$, path $q$ and label $n$.)


\smallskip

The usual notions `bound' and `closed' in lambda-calculus are covered by the following definition.

\begin{Def} \label{DefLamBou} $(i)$ Let $\bft$ be a $\lambda$-tree and
$p \, \rmL \, q \, n \inwv \bft$ such that $n = \lVert q \lVert + 1$. Then this $n$ is {\em bound \/} by the mentioned $\rmL$. Moreover, the path ${\rm L} \, q \, n$ is called the {\em {\rm L}-block} of (this occurrence of) $n$.

$(ii)$
The $\lambda$-tree $\bft$ is {\em closed\/} if all num-variables in $\bft$ are bound by some $\rmL \in \bft$.

\end{Def}

Note that the binding $\rmL$ of an occurrence of a num-variable $n$ always occurs in the (unique) complete path ending in this occurrence of $n$.

\begin{Lem}\label{LemLamLbl} $(i)$ The ${\rm L}$-block of a certain $n \in \bft$, if it exists, is unique.

$(ii)$ In a closed term, each occurrence of a num-variable $n$ corresponds to exactly one ${\rm L}$-block; but even when the term is closed, not every ${\rm L}$-block binds some num-variable.
\end{Lem}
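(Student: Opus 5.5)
For part $(i)$, fix an occurrence of $n$. By the remark after Definition~\ref{DefLamBou}, it ends a unique complete path $r\,n \inwv \bft$; uniqueness follows from the tree structure, and Lemma~\ref{LemLamDif} lets us identify this path with its label string. An L-block of this occurrence is a suffix $\rmL\,q\,n$ of $r\,n$ with $\lVert q \lVert = n-1$. Write $r = \ell_1 \ldots \ell_k$ and put $f(j) = \lVert \ell_{j+1}\ldots\ell_k \lVert$ for $0\le j\le k$. A candidate block starts at a position $j$ with $\ell_j = \rmL$ and $f(j) = n-1$.

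The key observation is that $f$ is non-increasing in $j$, and it drops by exactly one when we pass a position carrying $\rmL$. Precisely, $f(j-1) = f(j) + 1$ if $\ell_j = \rmL$, and $f(j-1) = f(j)$ otherwise. Now suppose $j < j'$ both carry $\rmL$. Then $\ell_{j'}$ lies strictly after position $j$, so $f(j) \ge f(j') + 1 > f(j')$. Hence at most one $\rmL$-position satisfies $f(j) = n-1$, which means the L-block, when it exists, is unique.

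For part $(ii)$, the first claim is immediate. Closedness (Definition~\ref{DefLamBou}$(ii)$) gives that every occurrence of a num-variable is bound by some $\rmL$, so an L-block exists, and part $(i)$ makes it unique. For the second claim I would give a concrete closed counterexample. One choice is the tree $\{\rmL\,\rmL\,1\}$, i.e.\ $\lambda\lambda 1$: its only variable is bound by the lower $\rmL$, and no num-variable is bound by the upper $\rmL$. Alternatively, one can point to the term $(\ast)$ of Example~\ref{ExaAboTre}. There, path (3) $\rmA\,\rmL\,\rmA\,\rmS\,\rmL\,2$ ends in $2$, and it is bound by the $\rmL$ in the second position, so the $\rmL$ in the fifth position binds nothing. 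One should also check that every variable in $(\ast)$ is bound.

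There is no real obstacle here. The only point needing care is the phrasing in $(ii)$. An L-block is by definition attached to a variable occurrence, so ``an L-block binding no variable'' should be read as an occurrence of $\rmL$ that is the head of no L-block. The proof should state that reading explicitly.
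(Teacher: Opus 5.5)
Your proof is correct and complete. The paper states this lemma without proof, treating it as immediate from Definition~\ref{DefLamBou} and the remark that the binding $\rmL$ lies on the unique complete path ending in the given occurrence. Your observation is exactly the detail the paper leaves implicit: the number of $\rmL$'s strictly after a position drops strictly from one $\rmL$ to the next, so at most one $\rmL$ can satisfy the binding condition. Your reading of the last clause of $(ii)$ as ``not every $\rmL$ heads an $\rmL$-block'' is the intended one; compare the inactive root paths of Definition~\ref{DefActBal}. Both of your counterexamples check out.
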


\medskip

To every $\lambda$-tree belongs a well-defined set $\calS$ of complete paths. A natural question is: when does a given set $\calS$ of paths define a $\lambda$-tree that can be constructed according to Definition~\ref{DefLamTre}?

Since this definition is inductive, it is no surprise that a direct procedure for deciding this question is inductive, as well. We now give a verbal representation of such a procedure.

\begin{Proc}\label{ProLamTre}
Firstly, we require that an arbitrary path in $\calS$ consists of elements of $\{ \rmL, ~ \rmA, ~ \rmS \}$ only, with, as an exception, the final label of such a path, which must be a positive natural number. We call such a path a {\em proper\/} path.

So we may assume that $\calS$ consists of proper paths. We further assume that $\calS$ is finite and that all paths in $\calS$ are finite, as well. In order to simplify the description of the procedure that we give in the following, we assume that the paths are lexicographically ordered, on a basic order, say $\rmL < \rmA < \rmS < 1 < 2 < \ldots$. We number the paths accordingly: $p_1, p_2, \ldots, p_n$.

Here comes the procedure for such a set $\calS$ of proper paths:

\underline{case 1:} Let $p_1 \equiv \rmL ~ q_1$ for some $q_1$. Then

{\em Requirement 1\/} ~ For all $1 \leq i \leq n:
~p_i \equiv \rmL ~ q_i$ for some $q_i$.

Skip the front-$\rmL$`s in all these paths; then we get $\calS' \equiv \{q_1, q_2, \ldots q_n \}$. Apply the procedure to~$\calS'$.


\underline{case 2:} Let $p_1 \equiv \rmA ~ q_1$ for some $q_1$. Then

{\em Requirement 2\/} ~ There must be some $p_i$ such that $p_i \equiv \rmS \, q_i$.

By the lexicographical ordering, there must now be an $1 \leq m \leq n$ such that all $p_k$ with $k \leq m$ begin with $\rmA$, and all $p_k$ with $k > m$ begin with $\rmS$.

Divide the set $\calS$ into two parts:

$\calS_1 := \{ p_k \in \calS ~|~ p_k {\rm ~begins~with~} \rmA \}$, and $\calS_2 := \{ p_k \in \calS ~|~ p_k {\rm ~begins~with~} \rmS \}$.

Skip the front-$\rmA$`s in all paths of $\calS_1$ and the front-$\rmS$`s in $\calS_2$ and collect them. We obtain $\calS'_1$ and $\calS'_2$. Apply the procedure to $\calS'_1$ and $\calS'_2$.


\underline{case 3:} Let $p_1 \equiv n$ for some positive $n$. Then

{\em Requirement 3\/} ~ $\calS \equiv \{ p_1 \}$.

\smallskip

If one of the requirements is not met, we abort the procedure and give the answer `no'. It is not hard to show that this procedure ends without abortion (and the answer is `yes') if and only if the original $\calS$ is the set of all paths belonging to one specific $\lambda$-tree. $\Box$

\end{Proc}

\section{beta-reduction}
\label{SecBet}

\subsection{A short history of updating in namefree beta-reduction}\label{SecSho}

In namecarrying systems of $\lambda$-calculus
a binder in a term $M$, say $\lambda x$, and the variable occurrences
that refer to it carry the same name, say $x$.
 In contrast, namefree systems use unnamed binders, say $\lambda$,
and replace a bound variable occurrence $x$
with an index that is a non-negative integer denoting the position of
the corresponding $\lambda x$ along the path connecting $x$ to the
root of $M$ in the representation of $M$ as an abstract syntax tree.
As we pointed out in the introduction,
the $\beta$-reduction step of the latter systems
requires updating the indexes occurring in a copied
argument, say $N$, to maintain the relationship between the bound variable
instances in $M$ and the respective binders.
Depending on the particular system,
if {\em immediate updating\/} is in effect, the update occurs by
applying a so-called {\em update function\/} to the indexes in $N$. In contrast, if {\em delayed updating\/} is in effect, the update function
is just stored in the syntax of the copied $N$.

The first namefree systems with immediate updating appear in (\cite{deB72})
with the basic update functions $\tau_{d,h}$ of type $\bbNp \to \bbNp$,
where $d\in\bbN$ and $h\in\bbN$.
\[
\tau_{d,h} \equiv i \mapsto
\left\{\begin{tabular}{ll}
$i$&if $i \le d$\\
$i+h$&if $i > d$\\
\end{tabular}\right.
\]
The systems accompanying (\cite{deB78b})
-- for example, those of (\cite{deB77,deB78a}) --
are the first to allow delayed updating
by featuring the term node $\phi(f)$
where $f$ is an arbitrary function of type $\bbNp \to \bbNp$. The original purpose of $\phi(f)$ is to present substitution as a single operation defined by recursion on the structure of terms.

Other systems of the same family, such as (\cite{Ned79,Ned80}), (\cite{KN93}),
feature the term node $\mu(d,h)$ or $\phi^{(d,h)}$
that holds the function $\tau_{d,h}$.
Moreover, the systems originating from (\cite{Aba91})
-- for instance those in (\cite{CHL96}, 1996) --
feature the explicit substitution constructors $\mathit{id}$ and $\uparrow$
that essentially hold the functions $\tau_{0,0}$ (the identity)
and $\tau_{0,1}$ (the successor) respectively.

\subsection{The usual beta-reduction in the path-approach}\label{SecUsu}

We continue with a number of useful definitions for namefree $\lambda$-calculus with the emphasis on paths.

\begin{Def}\label{DefUsuGra}
Let ${\bf t}$ be a $\lambda$-tree and $p$ a fixed root path in ${\bf t}$.

The set of  all paths $q \inv \bft$ such that $p \, q$ is a complete path in $\bft$, is denoted by ${\it tree}(p)$.
We call the set $\{p \, q \, | \, q \inv {\it tree}(p)\}$ the {\em grafted tree\/} of $p$ in ${\bf t}$.
\end{Def}

We note that the grafted tree of a root path $p$ in a {\em closed\/} $\bft$ is  `closed' itself, in the sense that all free variables in ${\it tree}(p)$ are bound in $p$.

We shall now describe the usual $\beta$-reduction in terms of paths and grafted trees. We start with the well-known notion `redex' (reducible expression).

\begin{Def}\label{DefUsuRed}

Let $\bft$ be a $\lambda$-tree. Let $p \, \rmA \, \rmL \, \inw \bft$. Then the adjacent pair $\rmA \, \rmL$ at the end of this path identifies a {\em redex\/}. This redex consists of two elements: (1) the {\em `function'\/} $\rmL \, \, {\it tree}(p \, \rmA \, \rmL)$ and (2) the {\em `argument'\/} ${\it tree}(p \, \rmS)$.
\end{Def}
See Figure~\ref{PicBetRed},~$(i)$.

\smallskip

We now consider the usual relation called $\beta$-reduction and expressed with the symbol $\red$. This $\beta$-reduction formalizes the action: `apply a function to an argument'. 
In namefree lambda-calculus, which is our subject here, $\beta$-reduction has important consequences for the numbers acting as num-variables. Some of these numbers should be `updated' after the $\beta$-reduction.



\begin{Def}\label{DefUsuBet} Let $\bft_1$ be a $\lambda$-tree. Assume that $p \, \rmA \, \rmL \inw \bft_1$. (For reference, we call the $\rmA$ and the $\rmL$ in this path {\em pivotal\/}.)
Consider the corresponding grafted tree $p \, \rmA \, \rmL \, {\it tree}(p \, \rmA \, \rmL)$.
Then $\bft_1 \red \bft_2$, where $\bft_2$ is the tree obtained from $\bft_1$ by

\smallskip

$(i)$ substituting and {\it updating\/} (see below) ${\it tree}(p \, \rmS)$ for every num-variable in ${\it tree}(p \, \rmA \, \rmL)$ that is bound by the pivotal $\rmL$,

$(ii)$ erasing the pivotal $\rmA$-$\rmL$-pair, and

$(ii)$ erasing all complete paths in the grafted tree $p \, \rmS \, {\it tree}(p \, \rmS)$.

\end{Def}

\begin{Def}\label{DefUsuUpd}
{\em Updating\/} num-variables due to $\beta$-reduction is the process illustrated in Figure~\ref{PicBetRed}. In picture $(ii)$ of this Figure, we distinguish the following cases regarding picture $(i)$:

$(1)$ $n < \lVert q \lVert + 1$,

$(2)$ $n > \lVert q \lVert + 1$,

$(3)$ $n = \lVert q \lVert + 1$, separated by subcases $l \leq \lVert r \lVert$ and $l > \lVert r \lVert$,

\end{Def}

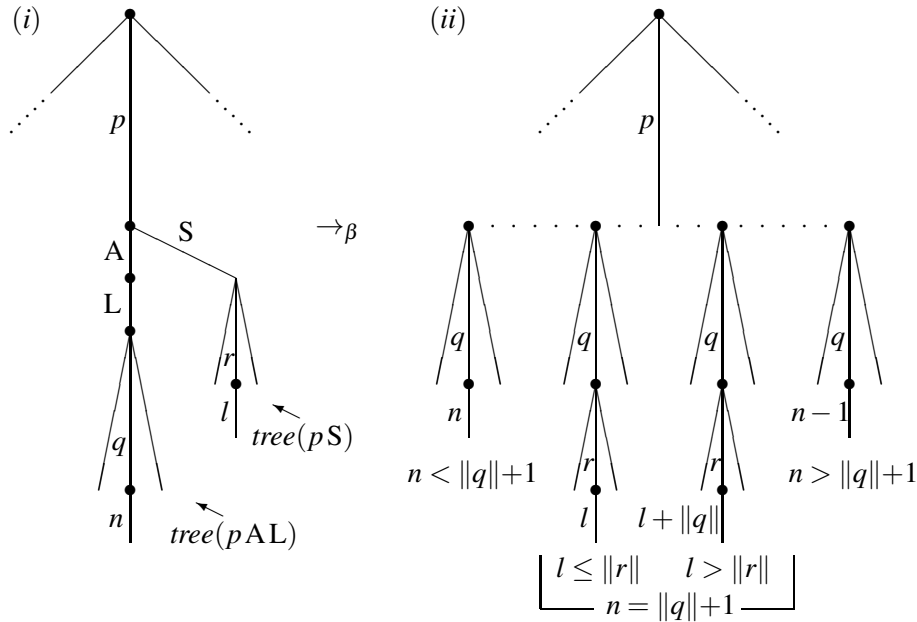
\begin{figure}[ht]

\begin{picture}(250,240)(0,45)
\put(5,265){$(i)$}

\put(50,70){\line(0,1){200}}
\put(50,270){\line(-1,-1){30}}
\put(50,270){\line(1,-1){30}}
\multiput(17,237.3)(-3,-3){5}{\circle*{1}}
\multiput(83,237.3)(3,-3){5}{\circle*{1}}

\put(50,190){\line(2,-1){40}}
\put(90,170){\line(0,-1){60}}

\put(90,170){\line(-1,-5){8}}
\put(90,170){\line(1,-5){8}}

\put(50,150){\line(-1,-5){12}}
\put(50,150){\line(1,-5){12}}

\put(50,90){\circle*{4}}
\put(50,150){\circle*{4}}
\put(50,170){\circle*{4}}
\put(50,190){\circle*{4}}
\put(50,270){\circle*{4}}
\put(90,130){\circle*{4}}

\put(42,76){$n$}
\put(43,105){$q$}
\put(40,157){$\rmL$}
\put(40,177){$\rmA$}
\put(42,227){$p$}
\put(68,184){$\rmS$}
\put(85,137){$r$}
\put(84,116){$l$}

\put(96,107){${\it tree}(p \, \rmS)$}
\put(65,70){${\it tree}(p \, \rmA \,  \rmL)$}

\put(114,117){\vector(-2,1){10}}
\put(85,80){\vector(-2,1){10}}

\put(120,187){$\red$}

\put(163,265){$(ii)$}

\put(250,270){\line(0,-1){80}}
\put(226,190){\line(0,-1){120}}
\put(274,190){\line(0,-1){120}}
\put(250,270){\line(-1,-1){30}}
\put(250,270){\line(1,-1){30}}

\multiput(217,237.3)(-3,-3){5}{\circle*{1}}
\multiput(283,237.3)(3,-3){5}{\circle*{1}}

\put(178,190){\line(-1,-5){12}}
\put(178,190){\line(1,-5){12}}
\put(178,110){\line(0,1){80}}

\put(226,190){\line(-1,-5){12}}
\put(226,190){\line(1,-5){12}}
\put(226,130){\line(-1,-5){8}}
\put(226,130){\line(1,-5){8}}

\put(274,190){\line(-1,-5){12}}
\put(274,190){\line(1,-5){12}}
\put(274,130){\line(-1,-5){8}}
\put(274,130){\line(1,-5){8}}

\put(322,190){\line(-1,-5){12}}
\put(322,190){\line(1,-5){12}}
\put(322,110){\line(0,1){80}}

\put(178,130){\circle*{4}}
\put(178,190){\circle*{4}}

\put(226,90){\circle*{4}}
\put(226,130){\circle*{4}}
\put(226,190){\circle*{4}}

\put(250,270){\circle*{4}}

\put(274,90){\circle*{4}}
\put(274,130){\circle*{4}}
\put(274,190){\circle*{4}}

\put(322,130){\circle*{4}}
\put(322,190){\circle*{4}}

\put(170,116){$n$}
\put(171,145){$q$}
\put(219,145){$q$}
\put(221,97){$r$}
\put(220,76){$l$}
\put(267,145){$q$}
\put(269,97){$r$}
\put(242,76){$l + \Vert q \lVert$}
\put(242,227){$p$}
\put(300,116){$n - 1$}
\put(315,145){$q$}

\multiput(187,190)(8,0){8}{\circle*{1}}
\multiput(313,190)(-8,0){8}{\circle*{1}}


\put(155,93){$n <  \lVert q \lVert  + 1$}
\put(299,93){$n >  \lVert q \lVert  + 1$}
\put(211,57){$l \leq \lVert r \lVert$}
\put(260,57){$l > \lVert r \lVert$}
\put(230,43){$n =  \lVert q \lVert  + 1$}

\put(225,45){\line(-1,0){20}}
\put(281,45){\line(1,0){20}}
\put(205,45){\line(0,1){20}}
\put(301,45){\line(0,1){20}}
\end{picture}

\caption{A picture of namefree $\beta$-reduction with updating}
\label{PicBetRed}
\end{figure}

\begin{The}\label{TheUsuUpd}
Updating preserves the bond between num-variables and their binding $\rmL$-labels.
\end{The}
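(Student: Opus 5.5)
We argue path by path, using Definition~\ref{DefLamBou}. By Lemma~\ref{LemLamDif} and Definition~\ref{DefLamIde}, both $\bft_1$ and $\bft_2$ are determined by their sets of complete paths. So it suffices to consider an arbitrary complete path of $\bft_2$ ending in a num-variable, and to compare its L-block with that of the corresponding complete path in $\bft_1$. The comparison uses the correspondence of Figure~\ref{PicBetRed}. A complete path of $\bft_2$ has one of three forms. Either it is $p\,q'\,m$, coming from $p\,\rmA\,\rmL\,q'\,m$ in $\bft_1$ (and then $m$ is $n$ or $n-1$). Or it is $p\,q\,r\,m$, coming from $p\,\rmA\,\rmL\,q\,n$ with $n=\lVert q\lVert+1$ together with $p\,\rmS\,r\,l$ in $\bft_1$. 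Or it lies outside the grafted tree of $p$ and is unchanged; this includes the right-hand siblings above $p$, since erasing the pivotal pair does not touch paths that do not run through it. In each case we show that the updated variable is bound by the ``same'' $\rmL$, meaning the occurrence of $\rmL$ that corresponds to the original binder under this path correspondence, or else remains free.

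For the first form, let $k = \lVert q\lVert$. In case $(1)$, $n<k+1$, so $n$ is bound by an $\rmL$ inside $q$. The path $q$ is copied unchanged and the label stays $n$, so by Definition~\ref{DefLamBou} the same $\rmL$ in $q$ still binds it. In case $(2)$, $n>k+1$, so the binder lies in $p$ at L-depth $n-k-1$ above the pivotal $\rmL$, or $n$ is free. In $\bft_2$ the pivotal $\rmL$ has been erased, so the L-count from the leaf up into $p$ is lower by exactly one. The label $n-1$ therefore reaches the same $\rmL$ in $p$, or stays free by the same margin. This is a short calculation with $\lVert\cdot\lVert$ on concatenations: $\lVert p\,\rmA\,\rmL\,q\lVert = \lVert p\lVert + k + 1$ and $\lVert p\,q\lVert = \lVert p\lVert + k$.

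For the second form, case $(3)$, the argument path $r\,l$ is now preceded by $p\,q$ instead of $p\,\rmS$. If $l\le\lVert r\lVert$, the binder of $l$ lies inside $r$, which is copied verbatim, so the binding is preserved as before. If $l>\lVert r\lVert$, the binder lies in $p$, or $l$ is free, at L-distance $l-\lVert r\lVert$ upward from the top of $r$. In $\bft_2$ the extra segment $q$ contributes exactly $\lVert q\lVert$ additional $\rmL$'s, and $\rmS$ contributed none. Hence the new label $l+\lVert q\lVert$ satisfies $(l+\lVert q\lVert) - \lVert q\,r\lVert = l - \lVert r\lVert$, and it reaches the same $\rmL$ in $p$.

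The main obstacle is not the arithmetic but stating precisely what ``the same binding $\rmL$'' means when the tree changes shape. The pivotal $\rmL$ disappears, and $r$ is duplicated once for every occurrence of a variable bound by the pivotal $\rmL$. I would handle this by fixing an explicit map from the $\rmL$-occurrences on each complete path of $\bft_2$ back to $\rmL$-occurrences in $\bft_1$: those of $p$ map to themselves, those of each copy of $q$ map to $q$, and those of each copy of $r$ map to $r$. With this map, the claim becomes: the binder of the updated variable is the image of the binder of the original one. Free variables are treated uniformly by taking the ``binder'' to be an imaginary $\rmL$ above the root. Finally, we note that variables bound by the pivotal $\rmL$ are exactly those that get replaced, so no bond to an erased $\rmL$ needs preserving.
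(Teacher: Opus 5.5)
Your proposal is correct and follows essentially the same route as the paper's proof. The paper also inspects the two representative paths $p\,\rmA\,\rmL\,q\,n$ and $p\,\rmS\,r\,l$ and runs through the same cases: $n<\lVert q\lVert+1$ (label unchanged), $n>\lVert q\lVert+1$ (label becomes $n-1$), and $n=\lVert q\lVert+1$ with subcases $l\le\lVert r\lVert$ (unchanged) and $l>\lVert r\lVert$ (label becomes $l+\lVert q\lVert$). Your explicit correspondence of $\rmL$-occurrences between $\bft_2$ and $\bft_1$, and your treatment of free variables, make precise what the paper leaves implicit (it tacitly assumes the binder lies in $p$ in cases $(2)$ and $(3b)$), but they do not change the argument.
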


{\it Proof}~~We illustrate what happens under $\beta$-reduction in Figure~\ref{PicBetRed}. We show in part~$(i)$ of that figure the essential parts of the redex.
In part~$(ii)$ of the same picture we show how the updating works.

We justify the preservation of the bindings in the update process as follows. Here, for easy reference, we denote an $\rmL$ binding $n$ as $\rmL_n$ and an $\rmL$ binding $l$ as $\rmL_l$. We also use the symbol $\red$ in an unorthodox manner. We write $m^{\rm upd}$ for an updated $m$.

It suffices to inspect two representative paths:

\smallskip

$(i)$ in $p \, \rmA \, \rmL \, \, {\it tree}(p \, \rmA \, \rmL)$ we choose $p \, \rmA \, \rmL \, q \, n$, with $q \, n$ a generic path in ${\it tree}(p \, \rmA \, \rmL)$,

$(ii)$ in $p \, \rmS \, \, {\it tree}(p \, \rmS)$ we choose $p \, \rmS \, r \, l$, with $r \, l$ a generic path in ${\it tree}(p \, \rmS)$.

\smallskip

We discern the cases for $n$ as described in Definition~\ref{DefUsuUpd}:

(1) $n < \lVert q \lVert + 1$.
Then $p \, \rmA \, \rmL \, q \, n = p \, \rmA \, \rmL \, q_1 \, \rmL_n \, q_2 \, n \red p \, q_1 \, \rmL_n \, q_2 \, n^{\rm upd}$, and hence $n^{\rm upd} = n$.

(2) $n > \lVert q \lVert + 1$.
Then $p \, \rmA \, \rmL \, q \, n = p_1 \, \rmL_n \, p_2 \, \rmA \, \rmL \, q \, n \red p_1 \, \rmL_n \, p_2 \, q \, n^{\rm upd}$, and hence $n^{\rm upd} = n -1$ (since the pivotal $\rmL$ has been erased).

(3) $n = \lVert q \lVert + 1$.
Then the pivotal $\rmL$ binds to $n$. Now we have to distinguish two cases for $l$:

(3a) $l \leq \lVert r \lVert$.
Then $p \, \rmS \, r \, l = p \, \rmS \, r_1 \, \rmL_l \, r_2 \, l$, hence $p \, \rmA \, \rmL \, q \, n \red p \, q \, r_1 \, \rmL_l \, r_2 \, l^{\rm upd}$, so $l^{\rm upd} = l$.

(3b) $l > \lVert r \lVert$.
Then $p \, \rmS \, r \, l = p_1 \, \rmL_l \, p_2 \, \rmS \, r \, l$, hence $p \, \rmA \, \rmL \, q \, n \red p_1 \, \rmL_l \, p_2 \, q \, r \, l^{\rm upd}$, so $l^{\rm upd} = l + \lVert q \lVert$, since $q$ now appears between $\rmL_l$ and $r$. ~~$\Box$

\subsection{Comparing beta-reduction in namecarrying and namefree lambda-calculus}\label{SecCom}

The set of terms of the $\lambda$-calculus that we have exposed until now, with a focus on paths, we denote as $\Tfre$. In the present section, we compare it with the namecarrying $\lambda$-calculus with the same focus on paths, that we call $\Tcar$. We do not explain how the terms in $\Tcar$ look like. We assume that the reader can easily devise that. The most important differences with $\Tfre$ are:

(1) $\Tcar$ has actual variables (such as $x$, $y$\ldots) instead of num-variables.

(2) Every $\rmL$-label in $\Tcar$ has a variable as subscript, e.g.,  $\rmL_x$ for some $x$.

\smallskip

Thus, an $\rmL$-block in $\Tcar$ appears as $\rmL_x \, p \, x$ instead of $\rmL \, p \, n$. And so on. In particular, we do not repeat how binding works in $\Tcar$.

\smallskip

{\em Note that we assume that $\lambda$-trees in $\Tcar$ are closed and that in a $\lambda$-tree~$\bft$, the bound variables are different.}

\medskip

In the remainder of this Section we present a number of simple results about related facts, concerning $\Tcar$ and $\Tfre$.

\smallskip

Most importantly, there is a well-known {\em isomorphism\/} between $\beta$-reductions in $\Tcar$ and $\Tfre$.
See Lemma~\ref{LemComIso} below.
First, we define the mappings between $\Tcar$ and $\Tfre$ and vise versa.

\begin{Proc}\label{ProComFre}
Let $\bfs \in \Tcar$. Then we obtain $[\bfs] \in \Tfre$ by the following method.

$(i)$ Let $x$ be a variable in $\bfs$, bound in $\Tcar$ via the $\rmL$-block $\rmL_x \, p \, x$. Replace this $x$ by $\lVert p \lVert + 1$. Do this for all num-labels.

$(ii)$ Erase all subscripts, such as $x$, below labels $\rmL_x$ in $\bf s$.
\end{Proc}

\begin{Proc}\label{ProComCar}
Let $\bft \in \Tfre$. Then we obtain $\langle \bft \rangle \in \Tcar$ as follows.

For each label $\rmL \in \bft$, find all num-variables $n_1, \ldots, n_k$ bound by this $\rmL$. (If the mentioned $\rmL$ occurs as final label in the path $p \, \rmL \inw \bft$, then the bound num-variables occur in ${\it tree}(p \, \rmL)$.)

Now replace $\rmL$ by $\rmL_x$, using a {\em new\/} variable $x$ (i.e., a variable which has not yet been used in this procedure), and (if the number of bound $n_i$'s is not zero) replace each of these $n_i$ by $x$.
\end{Proc}

We extend $\alpha$-equivalence to $\Tfre$. We assume that the reader understands what `correspond' means in the following Definition and Lemmas.

\begin{Def}\label{DefComVar}
$(i)$ Let $\bft$ be a $\lambda$-tree in either $\Tcar$ or $\Tfre$. We obtain the {\it variable-free tree\/} of $\bft$ by stripping all variables and num-labels, including the subscripts of labels $\rmL_x \in \Tcar$. All other labels and all edges, including the edges that had a num-label as label, stay as they are.

$(ii)$ Let $\bft \in \Tcar$ and let $\bft'$ be a $\lambda$-tree in either $\Tcar$ or $\Tfre$. Then $\bft$ is {\em $\alpha$-equivalent\/} to $\bft'$ (in symbols: $\bft \equiv_\alpha \bft'$) if the variable-free trees of $\bft$ and $\bft'$ are identical and the bindings in $\bft$ correspond one to-one to the bindings in $\bft'$.
\end{Def}

\begin{Lem}\label{LemComInv}
 $(i)$ Mappings $[-]$ and $\langle - \rangle$ are each others inverses modulo alpha-conversion.

(ii) Let $\bfs \in \Tcar$ and $\bft \in \Tfre$. Then $\bfs \equiv_\alpha [\bfs]$ and $\bft \equiv_\alpha \langle \bft \rangle$.
\end{Lem}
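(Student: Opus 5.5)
Both procedures leave the variable-free tree untouched and only rewrite variable information, so both parts come down to showing that the bindings are transported faithfully. I would prove (ii) first and derive (i) from it.

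For $\bfs \equiv_\alpha [\bfs]$: Procedure~\ref{ProComFre} changes only the leaf labels and the subscripts of $\rmL$-labels, so the variable-free trees of $\bfs$ and $[\bfs]$ coincide (Definition~\ref{DefComVar}$(i)$). For the bindings, take an occurrence of $x$ in $\bfs$ with complete path $p' \, \rmL_x \, p \, x$. It is replaced by $n = \lVert p \lVert + 1$, and the resulting path is $p' \, \rmL \, p \, n$ with the same $p$. By Definition~\ref{DefLamBou} this $n$ is bound precisely by the image of $\rmL_x$. By Lemma~\ref{LemLamLbl}$(i)$ that $\rmL$-block is unique. Hence the binding of each leaf is sent to the corresponding binding in $[\bfs]$, which gives a one-to-one correspondence. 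Here I use the standing assumptions that $\bfs$ is closed and that its bound variables are distinct, so that every $x$ has a well-defined $\rmL_x$-block.

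For $\bft \equiv_\alpha \langle \bft \rangle$: Procedure~\ref{ProComCar} again keeps the variable-free tree. Each $\rmL$ receives a fresh subscript $x$, and exactly the num-variables bound by that $\rmL$ are replaced by $x$. Because the variables are fresh, they are pairwise distinct. Therefore an occurrence of $x$ in $\langle \bft \rangle$ is bound by $\rmL_x$ exactly when the original $n$ was bound by the corresponding $\rmL$.

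The step I expect to be the main obstacle is closedness. The statement implicitly assumes that $\bft$ is closed, since $\Tcar$-trees are assumed closed. If $\bft$ is closed, every num-variable gets replaced, by Lemma~\ref{LemLamLbl}$(ii)$, and $\langle \bft \rangle$ lands in $\Tcar$. I would state this restriction explicitly.

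For (i), take $\bfs \in \Tcar$. By (ii), $\bfs \equiv_\alpha [\bfs] \equiv_\alpha \langle [\bfs] \rangle$. Since $\equiv_\alpha$ is defined through identity of variable-free trees plus a bijection of bindings, it is transitive, because bijections compose. So $\langle [\bfs] \rangle \equiv_\alpha \bfs$.

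In the other direction, take $\bft \in \Tfre$. Then $[\langle \bft \rangle]$ and $\bft$ have the same variable-free tree and the same bindings. In $\Tfre$, however, a leaf's value is \emph{determined} by its binder: it equals $\lVert p \lVert + 1$, where $p$ is the part of the path between the binder and the leaf. Consequently $[\langle \bft \rangle] = \bft$ holds literally, not just modulo $\alpha$. This observation, that $\alpha$-equivalence on $\Tfre$ collapses to identity, is the conceptual point I would emphasize. The remaining checks are routine bookkeeping.
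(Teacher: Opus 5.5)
The paper gives no proof of this lemma. It is stated as evident, and just before Definition~\ref{DefComVar} the authors explicitly leave the meaning of `correspond' to the reader, so there is no competing argument to compare with. Your proposal is correct and supplies the verification the paper omits.

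Two points are worth making explicit.

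First, read the one-to-one correspondence of bindings in Definition~\ref{DefComVar}$(ii)$ as the positional one induced by the identical variable-free trees: the leaf at a given position is bound by the $\rmL$ at a given position in one tree iff the same holds in the other. Under that reading, transitivity of $\equiv_\alpha$ is immediate. It is also the reading your last step needs. The mere existence of some bijection between the two sets of bindings would not let you conclude that $[\langle \bft \rangle]$ and $\bft$ coincide. Positional agreement does, because in a closed namefree tree a leaf bound by the $\rmL$ in $p' \, \rmL \, p \, n$ must carry $n = \lVert p \lVert + 1$.

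Second, the paper defines $\equiv_\alpha$ only with a $\Tcar$-tree on the left. Both $\bft \equiv_\alpha \langle \bft \rangle$ and your chain through $\langle [\bfs] \rangle$ therefore use its evident symmetric extension.

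Your closedness caveat is already covered by the paper, which states in Section~\ref{SecExp} that $\Tfre$ consists of closed namefree trees. Your remark that $\alpha$-equivalence collapses to syntactic identity on $\Tfre$ is the point made in the paper's introduction.
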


Now we show that the mapping $[~\,]$ from $\Tcar$ to $\Tfre$ `preserves' the binding relation between variables and $\rmL$-labels.

\begin{Lem}\label{LemComIso}
Let $\bfs_1 \in \Tcar$ and $\bfs_1 \red \bfs_2$ by the $\beta$-reduction with pivot $L_x$. Then there is a corresponding $\beta$-reduction $[\bfs_1] \red [\bfs_2]$ in $\Tfre$, with corresponding pivot $\rmL$.

{\it  Proof}~~ In the reduction $\bfs_1 \red \bfs_2$ the bindings are preserved (common knowledge).
Moreover, we have $\bfs_1 \equiv_\alpha [\bfs_1]$ (Lemma~\ref{LemComInv}\,$(ii)$), and in the reduction $[\bfs_1] \red [\bfs_2]$ the bindings are preserved as well (Theorem~\ref{TheUsuUpd}). Note that the
variable-free trees of $\bfs_2$ and $[\bfs_2]$ are identical (follows from the construction procedures for $[\bfs_2]$). So $\bfs_2 \equiv_\alpha [\bfs_2]$.
\end{Lem}

Accordingly, there is a corresponding lemma for the inverted situation. We shall not go into it.

\section{Alternative beta-reductions}
\label{SecAlt}

\subsection{Balanced beta-reduction in namefree lambda-calculus}\label{SecBal}

There is a variant of $\beta$-reduction that is interesting if it is advantageous to keep all the information that is present in the original $\lambda$-calculus term. Then an argument ${\it tree}(p\, \rmS)$ should remain in the $\beta$-reduced term, just as the pivotal $\rmA$-$\rmL$-pair (see Definition~\ref{DefUsuGra} and \ref{DefUsuRed}). We  call this reduction relation {\em balanced $\beta$-reduction\/}. In the literature, it originally appeared under the name $\beta_1$ (\cite{Ned73}). For details, see the more recent literature on the Linear Substitution Calculus (cf.\ (\cite{AccKes}) and (\cite{AccKes12})), in which it is called {\em distant beta\/}, symbol  $\rightarrow_{dB}$. See also (\cite{BarBon}) and (\cite{KB05}).

\smallskip

We start with the definition of a {\em balanced path\/} in a $\lambda$-tree.

\begin{Def}\label{DefBalPat}
A path $p$ in a $\lambda$-tree ${\bf t}$ is called {\em balanced\/}, denoted ${\it bal}(p)$, if it is constructed by means of the following inductive rules:

$(i)$ ${\it bal}(\varepsilon)$, i.e., the empty string is balanced;

$(ii)$ if ${\it bal}(p)$, then ${\it bal}(\rmA \, p \, \rmL)$;

$(iii)$ if ${\it bal}(p)$ and ${\it bal}(q)$, then ${\it bal}(p \, q)$.

\noindent In case~$(ii)$, we say that the mentioned $\rmA$ {\it matches\/} the mentioned ${\rmL}$.
\end{Def}

Examples of balanced paths: $\varepsilon$, $\rmA \, \rmL$, $\rmA \, \rmA \, \rmL \, \rmL$, $\rmA \, \rmL \, \rmA \, \rmL$, $\rmA \, \rmA \, \rmL \, \rmA \, \rmA \, \rmL \, \rmL \, \rmL$.


Note the close correspondence between balanced paths and (consecutive) nested pairs of parentheses. Note that only $\rmA$- and $\rmL$-labels occur on balanced paths, so there is no other label involved, such as $\rmS$.

\smallskip

Now maintenance of the pivotal $\rmA$-$\rmL$-pair in $\lambda$-tree $\bft$, as mentioned above, has a serious consequence: it possibly prevents other instances of $\beta$-reduction, that arise in a `normal' $\beta$-reduction. If, for example, the underlined pair \underline{$\rmA \, \rmL$} in the path $p \, \rmA \, \underline{\rmA \, \rmL} \, \rmL \inw \bft$ is the pivotal pair, then the maintenance of this pair prevents the other $\rmA$ and the other $\rmL$ from appearing as a new pivotal pair after the one-step $\beta$-reduction induced by \underline{$\rmA \, \rmL$}. With `normal' $\beta$-reduction, this does not happen since \underline{$\rmA \, \rmL$} then disappears.

This situation can be avoided by using {\em balanced\/} $\beta$-reduction.

The following definition is an introduction to the notion `balanced reduction'.

\begin{Def}\label{DefActBal}
Let ${\bf t}$ be a $\lambda$-tree, let $b \in {\bf t}$ be a balanced path, and assume that $p \, \rmA \, b \, \rmL \in^\wedge {\bf t}$. This root path is called {\em active\/} if there is at least one path $p \, \rmA \, b \, \rmL \, q \, n \inwv {\bf t}$ such that $n$ is bound by $\rmL$ after $b$. If there is no such path, the root path is {\em inactive}.
\end{Def}

Now we give the `balanced' variant of $\beta$-reduction, with symbol $\rightarrow_b$. We recall that `$\rmL \, q \, n$ is an $\rmL$-block' is equivalent to `$n$ is bound by the initial $\rmL$\,'.

\begin{Def}\label{DefBalRed}
Let ${\bf t}$ be a $\lambda$-tree, let $b \in {\bf t}$ be a balanced path, and assume that $p \, \rmA \, b \, \rmL \in^\wedge {\bf t}$ is an active root path.
Let $\bft'$ be ${\bf t}$ in which all paths $p \, \rmA \, b \, \rmL \,
q \, n$ with $\rmL \, q \, n$ being an $\rmL$-block  have been
replaced by $p \, \rmA \, b \, \rmL \, q \, {\it tree}(p \, \rmS)$.
Then $\bft \rightarrow_b \bft'$.
\end{Def}

The condition that $p \, \rmA \, b \, \rmL$ is {\em active\/} in this definition avoids an infinite reduction path generated by the mentioned root path.

\begin{Def}\label{BalDefBlo}
$(i)$ The displayed $\rmL$ in Definition~\ref{DefBalRed} is called the {\em pivotal\/} $\rmL$.

$(ii)$ Let ${\bf t}$ be a $\lambda$-tree with $r = p \, \rmA \, b \, \rmL \, q \, n \in \bft$, where $b$ is a balanced path and such that the $\rmL \in r$ binds to $n$. We recall that $\rmL \, q \, n$ is called an $\rmL$-block (Definition~\ref{DefLamBou}). We call $\rmA \, b \, \rmL \, q \, n$ an $\rmA$-{\it block\/}.
\end{Def} 

Consider two $\lambda$-trees ${\bf t}$ and ${\bf t}'$ such that ${\bf t} \rightarrow_b {\bf t}'$ as described in Definition~\ref{DefBalRed}, so each $n$ bound by the pivotal $\rmL$ has been replaced by ${\it tree}(p \, S)$ in ${\it tree}(p \, \rmA \, b \, \rmL)$. Now we have that $\bft$ is (almost) a subtree of ${\bft}'$, provided that we omit all the num-variables $n$ in ${\bf t}$ bound by the pivotal $\rmL$ and omit the corresponding edges, as well. So, balanced $\beta$-reduction has the property that it {\it extends\/} the original underlying tree $\bft$, but for a number of num-variables that disappear.

\subsection{Focused beta-reduction in namefree lambda-calculus}\label{SecFoc}

Focused $\beta$-reduction is a special case of balanced $\beta$-reduction. This reduction concentrates on {\it precisely one\/} num-variable $n$ at a specific position in a certain $\lambda$-tree $\bft$, this $n$ being bound by a {\em pivotal\/} $\rmL$. Since $\rmL$ is pivotal, there must be an $\rmA$ `coupled' to $\rmL$, so $n$ is the final label of a particular path $p \, \rmA \, b \, \rmL \, q \, n \inwv \bft$, where $b$ is balanced. Focused $\beta$-reduction replaces this $n$ by the argument connected to the pivot. See the following definition, in which the symbol $\rightarrow_f$ is introduced for `focused' $\beta$-reduction.

\begin{Def}
Let $\bft$ be a $\lambda$-tree, let $b$ be a balanced path, and let $r = p \, \rmA \, b \, \rmL \, q \, n \inwv \bft$ be a fixed complete path in $\bft$. Let $\bft'$ be identical to $\bft$, except that $r$ has been replaced by $p \, \rmA \, b \, \rmL \, q \, {\it tree}(p \, \rmS)$.
Then $\bft \rightarrow_f \bft'$.
\end{Def}

It will be clear that we want a kind of $\beta$-reduction here that preserves the $\rmA$-$\rmL$-pair, because there may be other num-variables bound to this $\rmL$, and maybe one desires later to replace one or more of these by ${\it tree}(p \, \rmA)$, in subsequent $\rightarrow_f$-reductions.

\smallskip

The motivation for introducing focused $\beta$-reduction comes from the process known as {\em definition unfolding\/} in the {\it namecarrying\/} $\lambda$-calculus. Then a defined notion occurring in $M$, say, $x$, is replaced by the definiens, say, $P$. This action generally occurs for only one instance of the definiendum $x$. So instead of replacing {\em all\/} occurrences of $x$ in $M$, one aims at {\em precisely one\/} occurrence.

The `name' of the definiendum is important here, since it is hard to work with a `name-less' definiendum. Nevertheless, we address this variant of $\beta$-reduction here, since the name-less variant is interesting as such.

The possibility of having {\em balanced\/} $\beta$-reduction is necessary to be able to deal with other $\rmA$-$\rmL$-pairs, which otherwise would be inaccessible. See the following example.

\begin{Exa}\label{SubSecFocExa}
We have, in $\lambda$-calculus with normal untyped $\beta$-reduction:

$(\lambda x  \lamdot ((\lambda y  \lamdot M) Q)) P \red (\lambda x  \lamdot (M[y := Q])) P \red M[y := Q][x := P]$.

\noindent In {\em focused\/} $\beta$-reduction, this becomes:

$(\lambda x \lamdot ((\lambda y  \lamdot M) Q)) P \rightarrow_f (\lambda x  \lamdot (\lambda y  \lamdot M[y_0 := Q])Q) P \rightarrow_f$

\mbox{$(\lambda x  \lamdot ((\lambda y  \lamdot M[y_0 := Q])Q)[x_0 := P])P$}.

\noindent Here $y_0$ and $x_0$ are selected instances of the free $y$'s and $x$'s in $M$, respectively.

The second of the two one-step focused reductions would not be possible without the possibility of having a balanced $\lambda$-term $(\lambda y \ldots)Q$ between $\lambda x$ and $P$.

\end{Exa}

The following lemma is obvious.
\begin{Lem}
Let $\bft \in \Tfre$ and $\bft \rightarrow_b \bft'$. Then $\bft \,{\twoheadrightarrow_{f}}\, \bft'$.
\end{Lem}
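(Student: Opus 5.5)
The plan is to decompose the single balanced step $\bft \rightarrow_b \bft'$ into a finite sequence of focused steps, one for each num-variable bound by the pivotal $\rmL$. Fix the active root path $p \, \rmA \, b \, \rmL \inw \bft$ of Definition~\ref{DefBalRed}. Let $r_1, \ldots, r_k$ be all complete paths of the form $p \, \rmA \, b \, \rmL \, q_i \, n_i \inwv \bft$ with $\rmL \, q_i \, n_i$ an $\rmL$-block, so that $n_i = \lVert q_i \lVert + 1$. This set is finite because $\bft$ is, and $k \geq 1$ by activity. By Lemma~\ref{LemLamDif} these $r_i$ are distinct strings. Moreover, $\bft'$ is exactly $\bft$ with each $r_i$ replaced by the grafted set $p \, \rmA \, b \, \rmL \, q_i \, {\it tree}(p \, \rmS)$.

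I then define intermediate trees $\bft_0 = \bft$ and $\bft_j$, where $\bft_j$ is obtained from $\bft_{j-1}$ by replacing $r_j$ with $p \, \rmA \, b \, \rmL \, q_j \, {\it tree}(p \, \rmS)$. The claim is that each $\bft_{j-1} \rightarrow_f \bft_j$ and that $\bft_k = \bft'$. Together these give $\bft \twoheadrightarrow_f \bft'$ (for $k=0$ the step would not exist, but activity excludes this case).

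For each $j$, I must check three invariants of $\bft_{j-1}$, which I will prove by induction on $j$.
\begin{enumerate}
\item[(a)] $r_j$ is still a complete path of $\bft_{j-1}$.
\item[(b)] The prefix $p \, \rmA \, b \, \rmL$ and the subtree ${\it tree}(p \, \rmS)$ are unchanged.
\item[(c)] The final label $n_j$ is still bound by the pivotal $\rmL$.
\end{enumerate}
Invariant (a) holds because earlier steps only replaced the distinct complete paths $r_1, \ldots, r_{j-1}$ by extensions that branch off at their own leaves. The complete paths through other leaves are therefore untouched. Invariant (b) holds because every modification happens strictly below $p \, \rmA \, b \, \rmL$, inside ${\it tree}(p \, \rmA)$. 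The paths through $p \, \rmS$ are disjoint from that region, since $\rmA$ and $\rmS$ diverge at the bifurcation after $p$. Invariant (c) holds because the string $q_j$ is unchanged, and binding depends only on $\lVert q_j \lVert$ (Definition~\ref{DefLamBou}). With these invariants in place, the focused step applies with the same $p$, $b$ and $q_j$.

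The main obstacle is well-formedness: each $\bft_j$ must again be a $\lambda$-tree, and the replacements must commute, so that the final result equals the simultaneous replacement in $\bft'$. Well-formedness holds because replacing a num-labeled leaf edge by a $\lambda$-tree preserves the inductive shape of Definition~\ref{DefLamTre}. This is exactly the role of the `loose end' convention, and it can be checked via Procedure~\ref{ProLamTre}, since the requirements there are local to the affected subtree. Commutation holds because the replacements act on distinct leaves and read the same unchanged ${\it tree}(p \, \rmS)$. Hence the order of the steps is irrelevant and $\bft_k = \bft'$.

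One subtlety must also be settled: the inserted copies of ${\it tree}(p \, \rmS)$ should not create new complete paths of the form $p \, \rmA \, b \, \rmL \, q \, n$ that the balanced step would also have replaced. Definition~\ref{DefBalRed} refers only to paths of the original $\bft$, so I enumerate the $r_i$ in $\bft$ beforehand, which avoids the issue. Finally, I note that no updating of num-labels is involved in either definition, so no index bookkeeping is needed.
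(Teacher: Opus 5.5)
Your proposal is correct and is exactly the argument the paper leaves implicit when it calls this lemma obvious. You split the balanced step into one focused step per num-variable bound by the pivotal $\rmL$, and each step stays applicable because earlier steps leave untouched the prefix $p\,\rmA\,b\,\rmL$, the argument ${\it tree}(p\,\rmS)$, and the remaining target paths (one small wording fix: in the non-expanding $\rightarrow_f$ the leaf edge $n$ is replaced by ${\it tree}(p\,\rmS)$ rather than extended, as your later phrase about replacing a num-labeled leaf edge correctly says).
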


\subsection{Erasing reduction}\label{SecEra}

After applying balanced or focused $\beta$-reduction, one also desires a reduction that removes the `remains', i.e., the $\rmA$ and the ${\rm L}$ in grafted trees $p \, \rmA \, b \, {\rm L} \, {\it tree}(p \, \rmA \, b \, \rmL)$ when no $n \in {\it tree}(p \, \rmA \, b \, \rmL)$ is bound to the displayed $\rmL$.
Moreover, the `argument' ${\it tree}(p \, \rmS)$ must be removed together with the mentioned $\rmS$. We call the corresponding reduction {\em erasing\/} reduction and use the symbol $\rightarrow_e$ for it. (This reduction is also referred to as `garbage collection' in the literature; see, e.g., (\cite{Ros93}).) 

The definition of $\rightarrow_e$ is not easy, because erasure applies to different parts of the original tree $\bft$.

\begin{Def}
Let ${\bf t}$ be a $\lambda$-tree and assume that a certain $p \, \rmA \, b \, {\rm L} \inw {\bf t}$, where $b$ is balanced. Moreover, assume that no num-variable in ${\it tree}(p \, \rmA \, b \, \rmL)$ is bound by the mentioned $\rmL$.

Then ${\bf t} ~\rightarrow_e~ {\bf t'}$, where $\bft'$ is $\bft$ in which $\rmS \, {\it tree}(p \, \rmS)$ has been removed and in which ${\it tree}(p)$ has been replaced by ${\it tree}(p \, \rmA)$ in which ${\it tree}(p \, \rmA \, b)$, in its turn, has been replaced by\\
$\{q \, n \inwv {\it tree}(p \, \rmA \, b \, \rmL) \, | \, n {\rm ~replaced~ by~} n-1 {\rm ~if~} n > \lVert q \lVert \}$.
\end{Def}

The necessity to replace $n$ by $n - 1$ is, of course, caused by the erasure of the mentioned $\rmL$.

Repeated application of $\rightarrow_e$ will result in a $\lambda$-tree without garbage.

\subsection{Theorems}

\begin{Lem} Reduction $~\rightarrow_e~$ is strongly normalizing, with a unique normal form.
\end{Lem}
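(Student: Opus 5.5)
Strong normalization follows from a simple measure, and uniqueness of normal forms from Newman's lemma applied to local confluence.

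\emph{Strong normalization.} Each $\rightarrow_e$-step removes the pivotal $\rmA$, the pivotal $\rmL$, the edge labelled $\rmS$, and the whole argument ${\it tree}(p\,\rmS)$. It adds no edges, since relabelling $n$ to $n-1$ keeps every num-edge. So the total number of labelled edges of $\bft$ strictly decreases, and every $\rightarrow_e$-sequence starting from $\bft$ has length at most the number of $\rmA$-labels in $\bft$. I will also record that $\rightarrow_e$ maps $\lambda$-trees to $\lambda$-trees: the contractum is built from the grafted trees in accordance with the inductive clauses of Definition~\ref{DefLamTre}. This can be checked with Procedure~\ref{ProLamTre}.

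\emph{Uniqueness of normal forms.} I will prove local confluence (WCR) and then invoke Newman's lemma. Take two erasable redexes in $\bft$, given by $p_1\,\rmA\,b_1\,\rmL_1$ and $p_2\,\rmA\,b_2\,\rmL_2$, with pivotal pairs $(\rmA_1,\rmL_1)$ and $(\rmA_2,\rmL_2)$. I split into cases according to their relative positions.

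\begin{itemize}
\item[(a)] One redex lies inside the argument ${\it tree}(p_1\,\rmS)$ of the other. Contracting the outer redex deletes the inner one. Contracting the inner redex first still leaves the outer redex in place, since its vacuity condition concerns only ${\it tree}(p_1\,\rmA\,b_1\,\rmL_1)$. Contracting the outer redex afterwards gives the same tree.
\item[(b)] The two redexes lie in disjoint parts of the tree. The steps commute trivially.
\item[(c)] One pivotal pair lies on a path through the other, that is, inside $b_1$ or below $\rmL_1$. This is the \textbf{main obstacle}.
\end{itemize}

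In case (c) I must check three things. First, erasing one pair preserves the balancedness of the residual path of the other: removing a matched $\rmA$–$\rmL$ pair from a balanced word leaves a balanced word, as with nested parentheses. Second, erasing one pair preserves the ``no bound variable'' condition of the other. Here I use Theorem~\ref{TheUsuUpd}-style reasoning: the decrement $n\mapsto n-1$ for $n>\lVert q\rVert$ preserves every binding other than that of the erased $\rmL$, and that $\rmL$ binds nothing by assumption. Third, the decrements applied in either order produce the same final indices. A variable below both erased $\rmL$'s and bound above them is decremented twice in either order, and one bound between them is decremented once in either order. So the two results coincide, and the diagram closes in one step on each side.

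I expect the bookkeeping of case (c) to be the delicate part, in particular when $\rmL_2$ lies inside $b_1$, so that $(\rmA_2,\rmL_2)$ is itself one of the matched pairs of $b_1$. My first approach is to argue with bindings, via $\alpha$-equivalence with $\Tcar$ (Lemma~\ref{LemComInv}). In $\Tcar$ erasure needs no renumbering, and commutation is evident there because vacuous binders and their arguments are simply dropped. I then transfer the result back to $\Tfre$.
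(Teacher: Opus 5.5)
The paper gives no proof of this lemma. It is stated bare, and only later, in Theorem~\ref{TheConRed}, is confluence of $\rightarrow_e$ credited to (\cite{Ned73}, Theorem~6.42). Your proposal is correct and supplies a self-contained argument where the paper defers to the literature. Each step deletes at least the pivotal $\rmA$ and creates nothing, and your peak analysis plus Newman's lemma gives uniqueness of normal forms.

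Three remarks.

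First, your case split misses one configuration. The second redex may lie in the argument of an application \emph{on} $b_1$, i.e.\ $p_2$ extends $p_1\,\rmA\,b'\,\rmS$ where $b'\,\rmA$ is a prefix of $b_1$. This falls under neither (a) nor (c) as you phrased them. It commutes trivially, since no path through that subtree passes $\rmL_1$, so nothing there is renumbered.

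Second, your analysis shows more than local confluence. Every peak closes with at most one step on each side, and with zero steps on one side in case (a). So $\rightarrow_e$ is subcommutative, hence confluent independently of termination; Newman's lemma is harmless but not needed.

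Third, the detour through $\Tcar$ does not save work. Lemma~\ref{LemComIso} concerns only $\red$, and its converse direction is merely mentioned. You would therefore have to prove an erasure analogue in both directions. That analogue rests on the very fact you already use directly: the decrement preserves every binding except those of the erased $\rmL$, which binds nothing. A cleaner way to finish case (c) is to note that a namefree tree is determined by its variable-free tree together with its binding relation. Both orders of erasure yield the same shape and the same bindings, hence the same tree.
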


We denote the reflexive, transitive closure of a reduction $\rightarrow_i$ by $\rrar_i$. An arbitrary sequence of reductions $\rightarrow_i$ and $\rightarrow_j$ is denoted $\rrar_{i,j}$.

\begin{The}\label{AltThe}
Let ${\bf t}$ and ${\bf t'}$ be $\lambda$-trees.

$(i)$ If ${\bf t} \rrar_\beta {\bf t}'$, then ${\bf t} \rrar_{b,e} {\bf t}'$.


$(ii)$ {\em (Postponement of $\rightarrow_e$ after $\rightarrow_b$)} ~ If ${\bf t} \rrar_{b,e} {\bf t'}$, then there is ${\bf t''}$ such that ${\bf t} \rrar_{b} {\bf t''} \rrar_e {\bf t'}$.

$(iii)$ {\em (Postponement of $\rightarrow_e$ after $\rightarrow_f$)} ~ If ${\bf t} \rrar_{f,e} {\bf t'}$, then there is ${\bf t''}$ such that ${\bf t} \rrar_{f} {\bf t''} \rrar_e {\bf t'}$.
\end{The}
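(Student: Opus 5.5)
Each of the three parts reduces to the one-step case followed by an induction on the length of the reduction sequence. The common tool is a simulation of a single $\red$-step by balanced and erasing steps, together with a commutation lemma between $\rightarrow_e$ and $\rightarrow_b$ (or $\rightarrow_f$).

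For $(i)$ it suffices to show that $\bft \red \bft'$ implies $\bft \rrar_{b,e} \bft'$, after which transitivity finishes the proof. Let the step $\bft \red \bft'$ have pivotal pair $\rmA\,\rmL$ in $p\,\rmA\,\rmL \inw \bft$. I take $b \equiv \varepsilon$, which is balanced.
\begin{itemize}
\item If $p\,\rmA\,\rmL$ is active, one $\rightarrow_b$-step replaces every $\rmL$-block $\rmL\,q\,n$ by $\rmL\,q\,{\it tree}(p\,\rmS)$. Afterwards no num-variable below the pivotal $\rmL$ is bound by it, so the erasing condition holds.
\item If the root path is inactive, I go directly to the erasing step.
\end{itemize}
One $\rightarrow_e$-step then removes $\rmS\,{\it tree}(p\,\rmS)$ and the pair $\rmA\,\rmL$, and decrements the variables with $n > \lVert q \lVert$. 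I must check that the result is exactly $\bft'$ of Definition~\ref{DefUsuBet}, which means comparing the erasing renumbering with the update cases of Definition~\ref{DefUsuUpd}. Case (1) is untouched. Case (2) becomes $n-1$. Case (3) needs more care. The copied argument paths $r\,l$ with $l > \lVert r\lVert$ have, after the $b$-step, the form $p\,\rmA\,\rmL\,q\,r\,l$. Here $l$ counts only $\rmL$'s above $p\,\rmS$, but it now sits below $q$ and the pivotal $\rmL$. The naive balanced step therefore breaks bindings. Either the argument tree must be inserted with the update $l\mapsto l+\lVert q\lVert+1$ built in (reading Definition~\ref{DefBalRed} as implicitly including the update of Theorem~\ref{TheUsuUpd}), or I verify that the combined shift $+\lVert q\lVert+1$ followed by the erasing $-1$ gives $l+\lVert q\lVert$, as in Figure~\ref{PicBetRed}. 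I would make this reading explicit and then carry out exactly this arithmetic, reusing the case analysis of Theorem~\ref{TheUsuUpd}.

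For $(ii)$ the key lemma is local: if $\bft \rightarrow_e \bfs \rightarrow_b \bfu$, then there is $\bfs'$ with $\bft \rrar_b \bfs' \rrar_e \bfu$. With this lemma, a standard argument pushes all $e$-steps to the right: induct on the number of $e$-steps that precede some $b$-step, using the measure (number of $b$-steps, and the positions of $e$-steps). Termination of this rewriting needs care, because the lemma may duplicate $e$-steps. To control it I use that $\rightarrow_e$ is strongly normalizing with unique normal forms (the preceding Lemma). I would phrase it as follows: collect all erasures into one final $\rrar_e$ to the appropriate tree, so that duplication of $e$-steps does no harm.

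To prove the local lemma, I trace the $b$-redex $p'\,\rmA\,b'\,\rmL$ of $\bfs$ back to $\bft$. Since erasing only removes an inactive pair $\rmA\,\rmL$ (where the inner part is balanced), its $\rmS$-argument, and shifts indices, every path of $\bfs$ has an ancestor in $\bft$. There the removed pair, being balanced together with its inner part $b$, can be spliced back into $b'$ and still leave a balanced path. So the same redex exists in $\bft$ and is still active. Three configurations have to be distinguished:
\begin{itemize}
\item the erased pair is disjoint from the $b$-redex;
\item the erased pair lies inside the function part or the balanced segment;
\item the erased pair lies inside the argument ${\it tree}(p'\,\rmS)$, so that it gets copied.
\end{itemize}
In the last configuration the single erasure turns into several erasures after the $b$-step. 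This is why the lemma yields $\rrar_e$ rather than $\rightarrow_e$. Part $(iii)$ goes the same way, with a single copy of the argument per $f$-step, which actually makes the copying configuration simpler.

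I expect the main obstacle to be the index bookkeeping: showing that in each configuration of the commutation lemma, and in the case (3b) computation for $(i)$, the index shifts from erasing and from copying the argument commute exactly. The plan is to handle this uniformly by the path-wise method used in the proof of Theorem~\ref{TheUsuUpd}. I would check only representative complete paths and compare $\rmL$-counts between binder and variable, instead of computing the trees globally.
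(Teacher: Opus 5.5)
Your plan is sound in outline. For $(i)$ it is exactly the simulation the paper dismisses as ``easy'': one $\rightarrow_b$-step with $b\equiv\varepsilon$ (none if the pair is inactive), followed by one $\rightarrow_e$-step.

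You are right that Definition~\ref{DefBalRed} only works if the copied argument is re-indexed by the number of $\rmL$'s it is moved under, namely $\lVert b\lVert+1+\lVert q\lVert$. The same holds for the definition of $\rightarrow_f$. The paper leaves this implicit, and your computation $l+\lVert q\lVert+1-1=l+\lVert q\lVert$ for case (3b) is the one nontrivial check.

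For $(ii)$ and $(iii)$ the paper gives no argument of its own. It only cites \cite{Ned73}, Theorem~6.19, and says ``similarly''. Your local commutation lemma plus induction is a self-contained replacement. It costs a case analysis on the relative positions of the erased pair and the $b$-redex. In return it works directly in the present path formalism and shows exactly where the index shifts of copying and erasing cancel.

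There are two points to tighten.

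First, the global step. Your tracing argument actually produces exactly \emph{one} $b$-step in $\bft$: the ancestor of the same redex, which now also replaces occurrences lying in the argument that is about to be erased. So state the lemma as $\bft\rightarrow_b\bfs'\rrar_e\bfu$. Then $\rrar_e\cdot\rightarrow_b\subseteq\rightarrow_b\cdot\rrar_e$ follows by induction on the number of $e$-steps, because duplicated $e$-steps always land to the right of the single $b$-step. Part $(ii)$ then follows by induction on the number of $b$-steps. There is no termination issue, and strong normalization of $\rightarrow_e$ plays no role.

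With $\rrar_b$ in the lemma, as you wrote it, the argument would not go through. The condition $\rightarrow_e\cdot\rightarrow_b\subseteq\rrar_b\cdot\rrar_e$ is only a local commutation of $\rightarrow_b$ with the converse of $\rightarrow_e$, and that does not by itself imply postponement. Collecting erasures at $e$-normal forms would at best relate ${\rm nf}(\bft)$ and ${\rm nf}(\bft')$; it would not give a sequence ending in $\bft'$.

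Second, your list of configurations is incomplete. Add the case where the erased pair encloses the pivotal $\rmA\,b'\,\rmL$ inside its own balanced segment, i.e.\ the erased $\rmA$ lies in $p'$ and the erased $\rmL$ lies in $q$. Its index arithmetic is the same as in the function-part case: the shift in $\bft$ is one larger and the later erasure decrements it. Also record why interleaved positions cannot occur: both segments are balanced.
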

{\it Proof\/} ~~ $(i)$ Easy.~
$(ii)$ (\cite{Ned73}, p.\ 48, Theorem 6.19).~
$(iii)$ Similarly. $\Box$

\begin{The}\label{TheConRed}
$\rightarrow_b$, $\rightarrow_f$ and $\rightarrow_e$ are confluent.
\end{The}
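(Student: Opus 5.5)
The plan is to prove confluence of each relation by establishing a local (one-step) diamond or commutation property and then lifting it to $\rrar$. For $\rightarrow_e$ this is already settled: the preceding Lemma gives strong normalisation with a unique normal form. A strongly normalising relation in which every term has a unique normal form is confluent, because any two reducts of $\bft$ reduce to the normal forms of those reducts. These normal forms are both normal forms of $\bft$, hence equal. So the work lies with $\rightarrow_f$ and $\rightarrow_b$. Since $\rightarrow_b$ need not terminate (the argument ${\it tree}(p\,\rmS)$ may itself contain active redexes), Newman's lemma is unavailable for them. I would aim instead for a diamond property.

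For $\rightarrow_f$ I would show the strong diamond property directly. A focused step is determined by one complete path $r = p\,\rmA\,b\,\rmL\,q\,n$. It replaces only the leaf $n$ of $r$ by a copy of ${\it tree}(p\,\rmS)$, and it keeps every existing label, edge and $\rmA$-$\rmL$ pair. Take two distinct focused steps on leaves $n_1$ (path $r_1$) and $n_2$ (path $r_2$). Each step leaves the other path intact as a complete path, except in one situation: $r_2$ lies in the grafted argument $p_1\,\rmS\,{\it tree}(p_1\,\rmS)$ of the first redex. In that case, after the first step, $r_2$ has a copy for every copy of the argument inserted. Before the second step, the copy of $r_2$ inside the newly inserted argument at position $n_1$ is a new complete path, not a residual of the same leaf. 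The residual of the $n_2$-step is therefore exactly the original path $r_2$, and performing it in both orders yields the same tree. The only point to check is that $b_2$ stays balanced and $n_2$ stays bound by its $\rmL$. Here the expanding nature helps: no labels are removed, and num-labels inside inserted copies are unchanged, since ${\it tree}(p\,\rmS)$ is grafted below $p\,\rmA\,b\,\rmL\,q$. I would verify through the $\rmL$-count argument of Theorem~\ref{TheUsuUpd}, case~(3b), that the indices in the copy must be interpreted relative to the new position. If the definition intends a lift by $\lVert b\,\rmL\,q\rVert$-type amounts, that lift commutes with the other step because it depends only on $\rmL$-counts along paths, and these counts are unchanged. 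Hence the two reducts can be joined in one step each, or in zero steps if they coincide, which gives confluence of $\rightarrow_f$ by the standard tiling argument.

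For $\rightarrow_b$ I would use the preceding Lemma, that $\rightarrow_b \subseteq \rrar_f$, together with the observation that $\rrar_f \subseteq \rrar_b$ modulo completing a block. Concretely, I would define a parallel/complete-development relation $\Rightarrow$ that performs a set of focused steps simultaneously, and show that $\rightarrow_b \subseteq \Rightarrow \subseteq \rrar_b$. Then I would prove that $\Rightarrow$ has the diamond property, using the residual analysis above lifted to sets of leaves: the join of $\bft \Rightarrow \bft_1$ and $\bft \Rightarrow \bft_2$ is obtained by replacing the union of the two leaf sets together with their residuals in copied arguments.

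The main obstacle is exactly this residual bookkeeping for $\rightarrow_b$. A $\rightarrow_b$ step replaces \emph{all} leaves bound by the pivot, including those that become bound by it only because another step copied an argument containing such a leaf into the pivot's scope. In that situation one side needs extra $\rightarrow_b$ steps, since the copied occurrences must also be substituted, and the activity condition of Definition~\ref{DefActBal} must be rechecked. I expect to handle this by an induction on the nesting depth of the copied arguments, which is finite because trees are finite.
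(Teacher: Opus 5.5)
\paragraph{A gap in the $\rightarrow_f$ diamond.} Your treatment of $\rightarrow_e$ is sound: strong normalisation together with uniqueness of normal forms gives confluence. The paper has no argument of its own to compare with here. It cites \cite{Ned73}, Theorems 6.38 and 6.42, for $\rightarrow_b$ and $\rightarrow_e$, and \cite{AccKes12} for $\rightarrow_f$.

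The genuine gap is the one-step diamond you claim for $\rightarrow_f$. It fails in exactly the case you single out. Take, in named notation, $t \equiv (\lambda x.\,x)\,((\lambda y.\,y)\,I)$ with $I \equiv \lambda u.\,u$.
\begin{itemize}
\item Contracting the occurrence of $x$ gives $t_1 \equiv (\lambda x.\,(\lambda y.\,y)\,I)\,((\lambda y.\,y)\,I)$.
\item Contracting the occurrence of $y$ gives $t_2 \equiv (\lambda x.\,x)\,((\lambda y.\,I)\,I)$.
\item The only focused step from $t_2$ copies the argument as it is \emph{now}, i.e.\ already modified. It yields $t_3 \equiv (\lambda x.\,(\lambda y.\,I)\,I)\,((\lambda y.\,I)\,I)$.
\item From $t_1$, contracting only the original $y$ gives a different tree. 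Reaching $t_3$ takes two steps, one of them on the copied $y$.
\end{itemize}
So, whatever you call the copy, the two orders do not yield the same tree. Neither the one-step diamond nor any closing diagram in which the $t_1$-side uses at most one step exists, so the tiling argument is unavailable. Newman's lemma is unavailable as well, because $\rightarrow_f$ does not terminate: it simulates $\rightarrow_b$, and hence $\beta$ on $\Omega$.

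\paragraph{What is missing, for $\rightarrow_f$ and for $\rightarrow_b$.} The missing idea is the multi-step development you reserve for $\rightarrow_b$; it is in fact what $\rightarrow_f$ needs. Let $\Rightarrow_f$ contract a set of leaves bound by pivotal $\rmL$'s, together with the copies of these leaves created inside arguments copied during the same development, performed argument-first. Then $\rightarrow_f \subseteq \Rightarrow_f \subseteq \rrar_f$, and the diamond property of $\Rightarrow_f$ is the residual argument you sketch.

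For $\rightarrow_b$, on the other hand, your $\Rightarrow$ (``a set of focused steps'') is not contained in $\rrar_b$. From $(\lambda x.\,x\,x)\,I$ a single focused step gives $(\lambda x.\,I\,x)\,I$, which is not reachable by $\rrar_b$. A $\rightarrow_b$-step replaces all occurrences bound by the pivot at once, after which no occurrence bound by it can reappear. For $\rightarrow_b$ the parallel relation must contract a set of pivotal $\rmA$-$\rmL$ pairs completely, including occurrences bound by them that arise in copies during the same development. Proving the diamond property of that relation is the actual content of \cite{Ned73}, Theorem 6.38, and your ``induction on nesting depth'' leaves it open.
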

{\it Proof\/} For $\rightarrow_b$ and $\rightarrow_e$, see (\cite{Ned73}, Theorems 6.38 and 6.42). For $\rightarrow_f$, see (\cite{AccKes12}).






\section{A new, lossfree beta-reduction}\label{SecNewLos}
\label{SecLos}

\subsection{Expanding beta-reduction}\label{SecExp}

The system we are going to introduce takes a simpler approach than the one mentioned in Section~\ref{SecSho}.
Its syntax has a term node $m$, which we call {\em inner numeric label\/}
where $m \in \bbNp$.
An active $m$ holds the function $\tau_{0,m}$
, while a passive, i.e., \ present but ignored, $m$ holds the function $\tau_{0,0}$.
In some sense, we want to show that supporting the functions $\tau_{0,h}$
suffices to implement delayed updating in the basic namefree $\lambda$-calculus. 

\medskip

At the end of Section~\ref{SecBal} we mentioned that, when $\bft \rightarrow_b \bft'$, the $\lambda$-tree $\bft$ is almost a subtree of $\bft'$. The word `almost' concerns the fact that num-variables bound by the pivotal $\rmL \in \bft$ are removed in the balanced reduction, so they do not reappear in $\bft'$.

In the present section, we investigate what happens if we {\it leave the num-variables bound by $\rmL$ where they are}. In that case $\bft$ becomes a proper subtree of $\bft'$. We might say that the resulting reduction has the property that {\it no information from $\bft$ has been lost\/} in the reduction from $\bft$ to $\bft'$.

\smallskip

In order to make this work, we have to extend our notion of `path': now num-variables may appear everywhere {\it inside\/} a path, so not only at the end.

\begin{Def}
An {\em extended\/} path is a finite string of labels $\rmL$,  $\rmA$, $\rmS$ and arbitrary num-labels, ending in a num-label.
\end{Def}

Example: $\rmL \, \rmL \, 2 \, \rmA \, \rmA \, 1 \, \rmA \, \rmL \, \rmL \, 2$.

\smallskip

Now, num-labels come in two sorts: inside a path or at the end. We also obtain a new kind of $\lambda$-trees.

\begin{Def}\label{DefInnOut}
$(i)$ Num-variables not being end-labels, we call {\it inner num-labels\/}.
Num-variables that {\it are\/} end-labels (leaves), we refer to as {\it outer\/} num-labels.

$(ii)$ A $\lambda$-tree in which inner variables are allowed, we call an {\em extended\/} $\lambda$-tree.
\end{Def}

Consequently, the definition of a {\it balanced path\/} (Definition~\ref{DefBalPat}) must be adapted as well, such that it allows inner num-labels {\em inside\/} the string of ${\rm A}$'s and $\rmL$'s: from now on the notion `balanced path' will mean an `extended' one.

\smallskip

{\it In the remainder of this section, we assume that these new definitions of path, balanced path and $\lambda$-tree are valid. Moreover, we shall omit the word {\em extended} for the new paths and $\lambda$-trees.}

\smallskip

We recall from Section~\ref{SecCom} that the symbol ${\cal T}^{\it fre}$ concerns the set of namefree closed trees (without inner variables).
The set of namefree closed (extended) $\lambda$-trees where also inner variables are permitted is denoted by ${\cal T}^{\it exp}$.

Obviously, $\Tfre \subseteq \Texp$.

\smallskip

The $\beta$-like reduction being a consequence of this extension with inner variables, we call {\it expanding\/} $\beta$-reduction. Again, this reduction has two obvious flavors: {\it balanced\/} or {\it focused\/}. We concentrate from now on {\em focused\/} reduction, since this reduction can be used to simulate balanced reduction (cf Theorem~\ref{AltThe}\,$(ii)$).

We use the symbol `$\rightarrow_{\it ef}$' for expanding focused $\beta$-reduction. Its definition is as follows.

\begin{Def}\label{DefBetDel}
Let ${\bf t} \in {\cal T}^{\it exp}$, let $b \in {\bf t}$ be a balanced path (which now may contain inner variables), assume that $r = p \, \rmA \, b \, \rmL \, q \, n \inwv {\bf t}$ is a fixed, complete path in $\bft$, where $n$ is bound by the displayed~$\rmL$. Let $\bft'$ be identical to $\bft$, except that $r$ has been replaced by $p \, \rmA \, b \, \rmL \, q \, n \, {\it tree}(p \, \rmS)$.
Then $\bft \rightarrow_{\it ef} \bft'$.

\end{Def}


The effect of $\rightarrow_{\it ef}$-reduction is that the end-label $n$ {\it and} the edge labeled $n$ stay where they are, and ${\it tree}(p \, \rmS) $ is simply attached to this edge (recall our Note after Definition~\ref{DefLamTre}\,$(i)$). We shall see that the remaining presence of the label $n$ enables the update at a later stage.

For a pictorial representation, see Figure~\ref{PicDelRed}. Note: if ${\it tree}(p \, \rmS) $ consists only of a single edge, labeled with a num-variable, then this edge is just attached to the edge labeled $n$. 

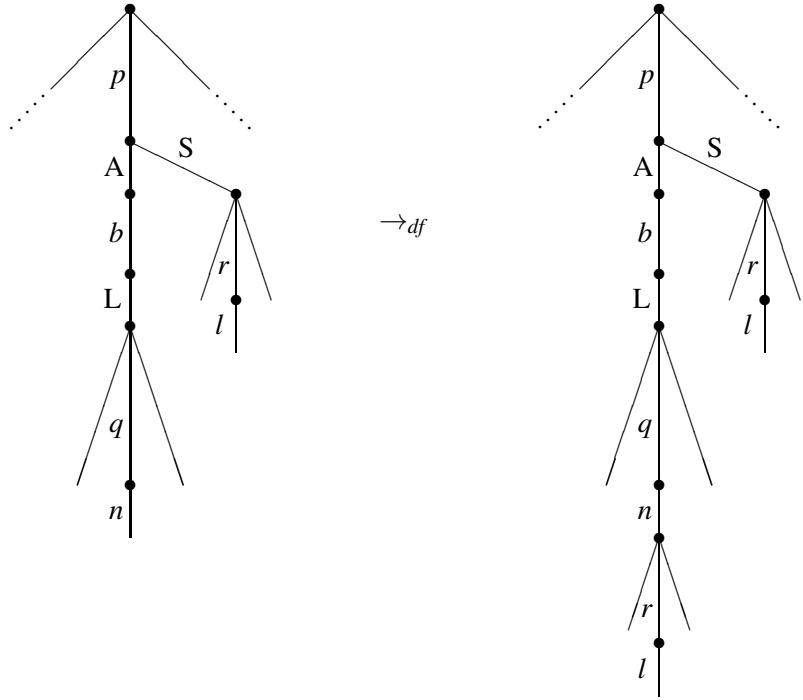
\begin{figure}[ht]

\begin{picture}(250,270)(0,10)

\put(50,70){\line(0,1){200}}
\put(50,270){\line(-1,-1){30}}
\put(50,270){\line(1,-1){30}}
\multiput(17,237.3)(-3,-3){5}{\circle*{1}}
\multiput(83,237.3)(3,-3){5}{\circle*{1}}

\put(50,220){\line(2,-1){40}}
\put(90,200){\line(0,-1){60}}

\put(90,200){\line(-1,-3){13.3}}
\put(90,200){\line(1,-3){13.3}}

\put(50,150){\line(-1,-3){20}}
\put(50,150){\line(1,-3){20}}

\put(50,90){\circle*{4}}
\put(50,150){\circle*{4}}
\put(50,170){\circle*{4}}
\put(50,200){\circle*{4}}
\put(50,220){\circle*{4}}
\put(50,270){\circle*{4}}
\put(90,200){\circle*{4}}
\put(90,160){\circle*{4}}

\put(42,77){$n$}
\put(42,110){$q$}
\put(40,157){$\rmL$}
\put(42,182){$b$}
\put(40,207){$\rmA$}
\put(42,242){$p$}
\put(68,214){$\rmS$}
\put(83,170){$r$}
\put(82,147){$l$}


\put(144,187){$\rightarrow_{\it df}$}

\put(250,270){\line(-1,-1){30}}
\put(250,270){\line(1,-1){30}}
\multiput(217,237.3)(-3,-3){5}{\circle*{1}}
\multiput(283,237.3)(3,-3){5}{\circle*{1}}

\put(250,220){\line(2,-1){40}}
\put(290,200){\line(0,-1){60}}

\put(290,200){\line(-1,-3){13.3}}
\put(290,200){\line(1,-3){13.3}}

\put(250,150){\line(-1,-3){20}}
\put(250,150){\line(1,-3){20}}

\put(250,90){\circle*{4}}
\put(250,150){\circle*{4}}
\put(250,170){\circle*{4}}
\put(250,200){\circle*{4}}
\put(250,220){\circle*{4}}
\put(250,270){\circle*{4}}
\put(290,200){\circle*{4}}
\put(290,160){\circle*{4}}

\put(242,77){$n$}
\put(242,110){$q$}
\put(240,157){$\rmL$}
\put(242,182){$b$}
\put(240,207){$\rmA$}
\put(242,242){$p$}
\put(268,214){$\rmS$}
\put(283,170){$r$}
\put(282,147){$l$}

\put(250,10){\line(0,1){260}}
\put(250,70){\circle*{4}}
\put(250,30){\circle*{4}}
\put(243,40){$r$}
\put(242,17){$l$}

\put(250,70){\line(-1,-3){11.7}}
\put(250,70){\line(1,-3){11.7}}

\end{picture}

\caption{A picture of namefree, expanding $\beta$-reduction}
\label{PicDelRed}
\end{figure}

\smallskip

We define what {\em inclusion\/} of (extended) $\lambda$-trees means.

\begin{Def}\label{DefIncTre}
Let ${\bf t}$ and ${\bf t}'$ be (extended) $\lambda$-trees. Then ${\bf t} \subseteq {\bf t'}$ if $p \inwv {\bf t}$ implies $p \inw {\bf t}'$. Moreover, ${\bf t} \subset {\bf t'}$ if ${\bf t} \subseteq {\bf t'}$ and $\bft \not\equiv \bft'$.
\end{Def}

\begin{The}\label{TheImpInc}
Let ${\bf t}, {\bf t'} \in \Texp$. Then ${\bf t} \rightarrow_{\it ef} {\bf t'}$ implies ${\bf t} \subset {\bf t'}$.
\end{The}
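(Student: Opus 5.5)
Unfolding Definition~\ref{DefIncTre}, the claim ${\bf t} \subset {\bf t'}$ has two parts. First, every complete path of $\bft$ must be a root path of $\bft'$. Second, $\bft \not\equiv \bft'$.

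By Definition~\ref{DefBetDel}, the complete paths of $\bft'$ are of two kinds. The first kind is every complete path $s \inwv \bft$ with $s \not\equiv r$; these are kept unchanged. The second kind is every path $r \, u$ with $u \inv {\it tree}(p\,\rmS)$, that is, $p \, \rmA \, b \, \rmL \, q \, n \, u$. Before using this, I would check that $\bft'$ is a well-formed extended $\lambda$-tree, so that Definition~\ref{DefIncTre} applies at all. Procedure~\ref{ProLamTre} is adapted in the obvious way: an inner num-label is a unary node, analogous to case~1. The only change happens below the end-edge labelled $n$, where we attach the $\lambda$-tree ${\it tree}(p\,\rmS)$; this is legitimate by the Note after Definition~\ref{DefLamTre}. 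We also have ${\it tree}(p\,\rmS)$ non-empty, since $p\,\rmA \inw \bft$ forces a sibling subtree $p\,\rmS$ by Definition~\ref{DefLamTre}. Closedness, required for membership in $\Texp$, I would take over from $\bft$ as given, since the paper treats $\rightarrow_{\it ef}$ as an operation on $\Texp$.

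For the inclusion, take $s \inwv \bft$. If $s \not\equiv r$, then $s$ is itself a complete path of $\bft'$ and hence in particular a root path of $\bft'$. If $s \equiv r$, pick any $u \inv {\it tree}(p\,\rmS)$. Then $r\,u \inwv \bft'$, and $r$ is a prefix of this path starting at the root. So $r \inw \bft'$. Here I use that a prefix of a root path is again a root path, which follows immediately from Definition~\ref{DefLamTre}$(iii)$ and Definition~\ref{DefLamPat}. By Lemma~\ref{LemLamDif}, we may identify paths with label strings, so no ambiguity arises.

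For strictness, note that $r\,u$ is a complete path of $\bft'$ whose length $|r\,u|$ exceeds $|r|$ because $u$ is non-empty. Suppose $r\,u$ were a complete path of $\bft$. Then $r$ would be a proper prefix of a complete path of $\bft$. But $r$ ends in a leaf edge (the outer label $n$), and in $\bft$ nothing hangs below that edge. This contradicts the uniqueness of strings along paths (Lemma~\ref{LemLamDif}). Hence $r\,u \notin \bft$, and under the identification of Definition~\ref{DefLamIde} we get $\bft \not\equiv \bft'$.

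The main obstacle I expect is the well-formedness step. The notion of extended $\lambda$-tree, with inner num-labels as unary nodes, is only described informally. So I would state explicitly the extended clause of Definition~\ref{DefLamTre}: a num-labelled edge may be continued by a $\lambda$-tree. I would then argue by induction on the structure of $\bft$ along the path $r$ that replacing the leaf edge $n$ by this clause again yields a tree. Once that clause is fixed, the inclusion and strictness arguments are routine.
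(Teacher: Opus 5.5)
Your proof is correct and takes the same approach as the paper, which dismisses the theorem as ``Obvious''. You carry out that unfolding in full: complete paths other than $r$ are unchanged, $r$ remains a root path of $\bft'$ as a proper prefix of each new $r\,u$ with non-empty $u \in {\it tree}(p\,\rmS)$, and $r\,u \notin \bft$ gives strictness. Your detour through well-formedness and closedness of $\bft'$ is not needed, because $\bft' \in \Texp$ is already a hypothesis of the theorem.
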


{\it Proof\/}~~Obvious.  ~~$\Box$

\begin{Lem}\label{LemBetExi}
Let ${\bf t} \in {\cal T}^{\it exp}$ and $p \, n \inw \bft$. Assume that $n$ is an {\em inner} num-variable. Then there is an $\rmL \in p$ that binds the $n$ and a matching $\rmA \in p$.
\end{Lem}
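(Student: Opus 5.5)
Since inner num-labels cannot be produced by the tree formation rules of Definition~\ref{DefLamTre}, I will prove the statement by induction on the length of an $\rightarrow_{\it ef}$-reduction sequence. The sequence starts from a tree in $\Tfre$ and ends in the given $\bft \in \Texp$. I am implicitly reading $\Texp$ as the set of trees reachable from $\Tfre$ by $\rightarrow_{\it ef}$, which is the only way inner variables arise in the paper. The claim to carry through the induction is slightly stronger than the lemma. \emph{For every root path $p\,n \inw \bft$ with $n$ an inner num-label, $p$ can be written as $p_1\,\rmA\,b\,\rmL\,q$, where $b$ is balanced (possibly containing inner labels), $\rmL\,q\,n$ is an $\rmL$-block, so $n=\lVert q \lVert+1$, and the displayed $\rmA$ matches the displayed $\rmL$.}

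The base case is immediate: a tree in $\Tfre$ has no inner num-labels. For the induction step, let $\bft_0 \rightarrow_{\it ef} \bft$ via the complete path $r = p\,\rmA\,b\,\rmL\,q\,n$ of Definition~\ref{DefBetDel}. The new tree $\bft$ consists of the old paths, which are unchanged, together with the new paths $p\,\rmA\,b\,\rmL\,q\,n\,s$ for $s \in {\it tree}(p\,\rmS)$. Take an inner label $m$ in $\bft$ with its root path $p'\,m$. There are three cases.
\begin{itemize}
\item If $p'\,m$ is already a root path of $\bft_0$ and $m$ was already inner there, the induction hypothesis applies directly. Prefixes of paths of $\bft_0$ are unchanged in $\bft$, so the same decomposition still works.
\item If $m$ is the label $n$ itself, which has just become inner, then $p' = p\,\rmA\,b\,\rmL\,q$. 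This is exactly the required decomposition, by the binding hypothesis of Definition~\ref{DefBetDel}.
\item If $m$ lies inside the grafted copy, then $p' = p\,\rmA\,b\,\rmL\,q\,n\,s'$, where $s'\,m$ is a root path of ${\it tree}(p\,\rmS)$ and $m$ is inner there. The induction hypothesis applied to $\bft_0$ at the root path $p\,\rmS\,s'\,m$ gives a decomposition of $p\,\rmS\,s'$.
\end{itemize}

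The third case is the main obstacle. The decomposition found for $p\,\rmS\,s'$ may put the matching pair either entirely inside $s'$ or straddling into $p$; the latter happens when the binder lies in $p$. In the first case the pair $\rmA\,b'\,\rmL$ sits inside $s'$ and transfers verbatim to the new path. In the second case the decomposition cannot literally cross the label $\rmS$, because balanced paths contain no $\rmS$. So the whole block $\rmA\,b'\,\rmL\,q'$ lies inside $p$, but $q'$ then contains $\rmS$, and $\rmS$ is allowed in $q'$. After grafting, the same $\rmA$ and $\rmL$ in $p$ still appear in front of $m$ on the new path. The $\rmL$-count between the binder and $m$, however, has grown by $\lVert \rmA\,b\,\rmL\,q\,n \lVert$, which is not compensated.

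Here I would use the intended reading of inner labels, namely that an inner label $n$ acts as the delayed update $\tau_{0,n}$. The binding count used in the invariant should be the one in which the $\rmL$'s passed over are offset by the inner label $n$, since the edge $n$ records precisely that $\lVert \rmA\,b\,\rmL\,q \lVert$-shift. I would therefore first try to make this adjusted notion of binding in extended trees precise, and then check the straddling case by a short $\rmL$-counting argument along the lines of case (3b) in the proof of Theorem~\ref{TheUsuUpd}. If that adjusted count turns out too delicate, I will fall back on proving only the weaker statement as literally phrased: some $\rmL$ in $p$ binds $n$ and some $\rmA$ in $p$ matches it.
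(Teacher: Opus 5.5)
Your overall frame is right: induction along an $\rightarrow_{\it ef}$-sequence from $\Tfre$, with three cases. It is also more careful than the paper's two-line argument, which only notes that an inner label is born as the $n$ of a step, under a pivotal, matched $\rmL$, and then persists because $\rightarrow_{\it ef}$ is expanding. You correctly see that inner labels \emph{copied} along with ${\it tree}(p\,\rmS)$, whose binder lies in $p$, are not covered by persistence.

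\textbf{The gap.} Your treatment of that copied case fails, in two ways.
\begin{itemize}
\item The edge $n$ does \emph{not} record the $\lVert \rmA\,b\,\rmL\,q \lVert$-shift. The value $n = \lVert q \lVert + 1$ sees only $\rmL\,q$, never the $\rmL$'s of $b$, so a $\tau_{0,n}$-style offset is off by $\lVert b \lVert$ whenever $b \neq \varepsilon$. Take $\lambda w.\,(\lambda x.\,(\lambda z.\,\lambda y.\,y)\,(\lambda k.\,k)\,x)\,w$ and expand $x$, then $y$ (so $b = \rmA\,\rmL$). This gives the root path $\rmL\,\rmA\,\rmL\,\rmA\,\rmA\,\rmL\,\rmL\,1\,1$. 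Its last (inner) $1$ is a copy of $x$, but your offset count stops at the $\rmL$ of $\lambda z$.
\item The fallback is not available either: with binding read literally as in Definition~\ref{DefLamBou}, the statement is false. From $\lambda w.\,(\lambda x.\,(\lambda y.\,\lambda u.\,y)\,x)\,w$, expanding $x$ and then $y$ yields the root path $\rmL\,\rmA\,\rmL\,\rmA\,\rmL\,\rmL\,2\,1$ with an inner final $1$. That $1$ is literally bound by the $\rmL$ of $\lambda u$, which has no matching $\rmA$: neither $\rmL$ nor $\rmL\,\rmA\,\rmL$ is balanced. Since $\rmL$-blocks are unique, no other $\rmL$ binds it.
\end{itemize}

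\textbf{The missing idea.} Read binding in extended trees the way the automaton $\calP$ of Procedure~\ref{DefAlgBin} does. An inner label $j$ met in a state $(m,k)$ with $m>0$ is not an offset but a jump. The search climbs to the $\rmL$ binding $j$, then to the $\rmA$ matching that $\rmL$, and resumes there with $(m,k)$ (rules 5--9a). So the whole segment $\rmA\,b\,\rmL\,q\,j$ is transparent.

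Because this jump presupposes the lemma's own conclusion, the invariant should be the following. For every inner $j$, the run of $\calP$ from $(j,1)$ ends with $(0,0)$ immediately left of the $\rmA$ matching the $\rmL$ that binds $j$. Read `bound' in Definition~\ref{DefBetDel} the same way.
\begin{itemize}
\item Your second case gives this for the new inner $n$. After the pivotal $\rmL$, the $(0,l)$-phase crosses the balanced $b$ with $l \geq 1$ throughout, skipping inner labels by rule 8. The pivotal $\rmA$ then brings $l$ to $0$.
\item The copied case then needs no counting at all. The run for the copy of $m$ in $\bft$ coincides with the run for $m$ in $\bft_0$ along $s'$. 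The old run, or one of its sub-runs, may cross the $\rmS$ by rule 4 in some state $(m',k)$. It can only do so with $m'>0$, since rules 6--8 have no case for $\rmS$. At that point the new run meets the grafted $n$, pushes $(m',k)$, and pops it just left of the pivotal $\rmA$, i.e.\ at the end of $p$, with the same state and stack.
\end{itemize}
Hence both runs return the same $\rmL$ and $\rmA$, whether these lie inside $s'$ or in $p$.
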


{\it Proof\/}~~An inner num-label $n$ can only appear in a $\rightarrow_{\it ef}$-reduction, when the $\rmL$ binding $n$ is pivotal in the reduction and this $\rmL$ is matched to an $\rmA$ (see definition~\ref{DefBetDel}). This situation is maintained by the expanding nature of $\rightarrow_{\it ef}$-reduction (Theorem~\ref{TheImpInc}). ~~$\Box$

\subsection{Tracing the binder in expanding beta-reduction}\label{SecTra}

Let ${\bf t}_0 \in {\cal T}^{\it exp}$ and ${\bf t}_0 \rrar_{\it ef} {\bf t}$, so ${\bf t} \in \Texp$ is the result of a series of expanding, focused reductions. These reductions may introduce inner variables, so it is not immediately clear what the binders are for (inner or outer) variables. In this section, we investigate how to determine the binder of a num-variable in ${\bf t}$.

\smallskip

Let $p \, n \inw {\bf t}$. Here, $n$ can be an inner or an outer num-label. We describe a pushdown automaton ${\cal P}$ that finds the {\em ${\rm L}$-binder} of $n$, i.e., the label ${\rm L} \in p$ that binds $n$ (this label always exists, since ${\cal T}^{\it exp}$ only contains closed terms).

\smallskip









We now present and explain the action of the pushdown automaton $\calP$. Let ${\bf t} \in \Texp$ and $p \, n \inw {\bf t}$. Assume that we desire to apply algorithm $\calP$ to find the ${\rm L}$-binder of $n$.

\begin{Rem} \label{RemTraPre} {\em Preliminary remarks.}

In algorithm $\calP$, we employ {\it states\/} that are pairs of natural numbers: $(k, l)$. We start with the insertion of a pair $(n,i)$  in the string $p \, n$, {\it between\/} $p$ and $n$. Here, $i$ originally is $0$ or $1$. The automaton moves the pair {\it to the left\/} through $p$, one step at a time, successively passing the labels in $p$ and meanwhile adapting the numbers in the pair.

The automaton has an outside {\it stack\/} that will contain certain states that are {\it pushed\/} at the top of the stack; a state on top of the stack can also be {\it popped back\/}, i.e., inserted into the path $p$, again.

The {\it transitions\/} are described in Definition~\ref{DefAlgBin}. A possible one-step transition is denoted by the symbol $\rightarrow$.

The procedure may be complicated by several recursive calls. In every recursive call, the algorithm starts with a `new' num-variable $j$ on the path $p$. Subsequently, it finds not only the $\rmL$ binding this $j$, but also {\em the $\rmA$ matching the~$j$}.
\end{Rem}


\smallskip

The formal description of $\calP$ is the following.

\begin{Proc}\label{DefAlgBin}

{\bf Preparation:} Transform $p \, n$ into $p \, (n,i) \, n$, where $i = 0$ if the goal is to find the ${\rm L}$-binder, and $i = 1$ in the recursion, when both the $\rmL$-binder of $n$ and the matching ${\rm A}$ are detected.

Now {\bf start} $\calP$ employing the following transition rules.

\smallskip

$(1)$ ~ {\it first step:} ${\it stack} = \emptyset$

$(2)$ ~ $p ~ {\rm L} ~ (m,k) ~ q ~ \rightarrow ~ p ~ (m \makebox{--} 1,k) ~ {\rm L} ~ q$, if $m > 0$

$(3)$ ~ $p ~ {\rm A} ~ (m,k) ~ q ~ \rightarrow ~ p ~ (m,k) ~ {\rm A} ~ q$, if $m > 0$

$(4)$ ~ $p ~ {\rm S} ~ (m,k) ~ q ~ \rightarrow ~ p ~ (m,k) ~ {\rm S} ~ q$, if $m > 0$

$(5)$ ~ $p ~ j ~ (m,k) ~ q ~ \rightarrow ~ p ~ (j,1) ~ j ~ q$, if $m > 0$; {\it push} $(m,k)$

$(6)$ ~ $p ~ {\rm L} ~ (0,l) ~ q ~ \rightarrow ~ p ~ (0,l \makebox{+} 1) ~ {\rm L} ~ q$, if $l > 0$

$(7)$ ~ $p ~ {\rm A} ~ (0,l) ~ q ~ \rightarrow ~ p ~ (0,l \makebox{--} 1) ~ {\rm A} ~ q$, if $l > 0$

$(8)$ ~ $p ~ j ~ (0,l) ~ q ~ \rightarrow ~ p ~ (0,l) ~ j ~ q$, if $l > 0$

$(9a)$ \, $p ~ (0,0)  ~ q ~ \rightarrow ~ p ~ {\it pop}  ~ q$, if  ${\it stack} \not = \emptyset$

$(9b)$ \, $p ~ (0,0) ~ q ~ \rightarrow {\it stop}$, if ${\it stack} = \emptyset$.

\end{Proc}


\begin{Lem}\label{LemALBlo}
Let ${\bf t} \in \Texp$ and $p \, n \inw {\bf t}$.  Apply $\calP$ to $p \, (n,i) \, n$, where $i = 0$ or $1$.

$(i)$ If $\calP$ stops in $p' \, (0,0) \, q' \, n$ with {\em empty\/} stack (see rule 9b), then each possible recursion has ended. Moreover, $q' \, n \equiv {\rm L} \, q'' \, n$, and the mentioned ${\rm L}$ binds the $n$.

$(ii)$ If $\calP$ stops in $p' \, (0,0) \, q' \, n$ with {\em non-empty\/} stack (see rule 9a), the $\rmA$ matching the binding $\rmL$ of $n$ has been found, the top-element of the stack is popped and $\calP$ continues where it had stopped before the recursion step.
\end{Lem}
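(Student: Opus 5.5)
We argue by induction on the number of applications of rule $(5)$ in the run of $\calP$, i.e.\ on the nesting depth of recursive calls; inside that, we argue by induction on the length of the part of $p$ already traversed. Two invariants describe a state $(m,k)$ after it has moved leftward past a suffix $s$ of $p$.

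In the \emph{$\rmL$-phase} ($m>0$), $m$ equals $n$ minus the number of $\rmL$'s in $s$ that do not lie inside an inner segment already handled by a recursion. Here we need the fact that an inner num-label $j$ on the path stands for a copy of ${\it tree}(p_0\,\rmS)$ grafted by $\rightarrow_{\it ef}$ (Definition~\ref{DefBetDel}). Rule $(5)$ makes $\calP$ jump over this copy. By the induction hypothesis, i.e.\ clause $(ii)$ for the smaller recursion, the state popped back by $(9a)$ sits immediately to the left of the $\rmA$ matching the $\rmL$ that binds $j$. That $\rmL$ exists by Lemma~\ref{LemBetExi}.

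In the \emph{$\rmA$-phase} ($m=0$, $k=l>0$), $l-1$ equals the number of unmatched $\rmL$'s minus the number of $\rmA$'s in the traversed part after the binder. Rules $(6)$ and $(7)$ are exactly the parenthesis counter of Definition~\ref{DefBalPat}, and rule $(8)$ ignores inner labels. Hence $l$ reaches $0$ exactly when the segment between the new position and the binding $\rmL$ has the form $\rmA\,b\,\rmL$ with $b$ balanced, which means the matching $\rmA$ has been found.

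With these invariants the two clauses follow. For $(i)$, the start value $i=0$ means no $\rmA$-phase is needed. The first time $m$ drops to $0$ is via rule $(2)$ at some $\rmL$. The state then reads $(0,0)$, so rule $(9b)$ applies with an empty stack, and all recursions have ended because every push was matched by a pop. The $\rmL$ just passed binds $n$. We justify this by comparing with Definition~\ref{DefLamBou}: the binder count $\lVert q\lVert+1$ must be computed on the original, pre-expansion path, and that path is obtained from $p\,n$ by deleting every grafted argument copy, which is exactly what the recursions skip.

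For $(ii)$, with $i=1$ the state arrives at the binder as $(0,1)$. The $\rmA$-phase invariant then shows that $(0,0)$ occurs precisely to the left of the matching $\rmA$. Rule $(9a)$ pops the saved state, which is the configuration $\calP$ was in before the recursion, displaced to the $\rmA$ position. Skipping $\rmA\,b\,\rmL\,q$ in this way is justified because the copied argument replaced an occurrence bound through that redex.

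The main obstacle is the $\rmL$-phase invariant across a recursion. We must show that jumping from the inner label $j$ directly to the left of its matching $\rmA$ corresponds exactly to the delayed update $l\mapsto l+\lVert q\lVert$ of Theorem~\ref{TheUsuUpd}. For this we would first prove an auxiliary lemma: in any $\bft_0\rrar_{\it ef}\bft$, the num-labels in a grafted copy after $j$ keep the scoping they had under $p_0\,\rmS$. The lemma is proved by induction on the reduction sequence, using Theorem~\ref{TheImpInc} to know that earlier paths persist unchanged.
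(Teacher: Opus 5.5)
Most of your description of $\calP$ is right: after the recursion at an inner $j$, the popped state sits just left of the $\rmA$ matching the $\rmL$ found for $j$, and rules (6)--(8) are parenthesis counting with inner labels ignored. But the part that carries the content of the lemma, namely why the $\rmL$ reached \emph{binds} $n$, is wrong where you make it concrete and missing where you defer it.

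Rule (5) does not make $\calP$ jump over the grafted copy. The copy hangs below $j$, i.e.\ to its right, and $\calP$, moving leftwards, has already traversed it when it meets $j$. What the recursion skips is the function-side segment $\rmA\,b\,\rmL\,q$ (your own paragraph on (ii) says so). In the pre-expansion path $p_0\,\rmS\,u$ this segment corresponds to the single label $\rmS$. Hence your justification of (i), which counts on the path with the grafted copies deleted, fails. In $\lambda\,((\lambda\lambda\,(2\,1))\,(\lambda\,2))$, one $\rightarrow_{\it ef}$-step on $\rmL\,\rmA\,\rmL\,\rmL\,\rmA\,2$ yields $\rmL\,\rmA\,\rmL\,\rmL\,\rmA\,2\,\rmL\,2$. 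Here $\calP$ correctly returns the first $\rmL$ (origin path $\rmL\,\rmS\,\rmL\,2$). Deleting the copy instead either removes the final $2$ or, if $2$ is kept, makes it bound by the pivotal $\rmL$.

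The missing idea is that \emph{binds} in $\Texp$ must be fixed by inheritance along $\bft_0\rrar_{\it ef}\bft$: the labels of a copy grafted at $j$ are scoped as in $p_0\,\rmS\,u$. The lemma must then be proved by induction on the length of that sequence, not only on the number of rule-(5) steps in one run. An induction on the run alone shows only that the $\rmA$ reached is the parenthesis match of the $\rmL$ reached. It does not show that this $\rmA$ is the pivotal $\rmA$ whose argument was copied after $j$, and that is what makes the pop continue correctly.

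The key step is a simulation: running $\calP$ on $p_0\,\rmA\,b\,\rmL\,q\,j\,u$ from a position in $u$ reaches, after the recursion at $j$ and the pop, exactly the configuration (state and stack) that the run on $p_0\,\rmS\,u$ has after crossing $\rmS$ by rule (4). This needs four facts.
(a) The induction hypothesis for the earlier tree, where $p_0\,\rmA\,b\,\rmL\,q\,j$ was complete and $j$ was bound by the pivotal $\rmL$, shows the recursion reaches exactly that $\rmL$. Note that the count through $q$ is no longer $\lVert q\rVert$ once $q$ contains inner labels.
(b) This run is unchanged in $\bft$, because $\calP$ reads only leftwards and later steps only extend leaves (Theorem~\ref{TheImpInc}).
(c) The component $k$ is inert in the $\rmL$-phase, so the run started with $(j,1)$ over a non-empty stack tracks the one started with $(j,0)$ until it reaches $(0,1)$.
(d) The path $b$ is balanced and $\rmS$-free, so the $\rmA$-phase stops exactly at the pivotal $\rmA$, i.e.\ at the end of $p_0$.

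Your auxiliary lemma only restates the inherited scoping. The simulation linking it to $\calP$ is the actual proof, and the paper does not supply it either; it defers to Section~\ref{SecFur}. Note also that the jump corresponds to crossing $\rmS$, not to the numeric shift $l\mapsto l+\lVert q\rVert$ of Theorem~\ref{TheUsuUpd}, since the pivotal $\rmA\,b\,\rmL$ stays in place.
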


{\em Proof\/} ~~ See Section~\ref{SecFur}.

\begin{Exa}\label{ExaProPat}
We look for the ${\rm L}$-binder of the final num-label, i.e., $3$, in the path ${\rm A \, L \, L \, L \, A \, L \, S \, 1 \, L \, 3} $. 

So we start $\calP$, inserting ${(3,0)}$ between the second to last label and the final label. Next, we obtain the following sequence of steps. (The superscripts to the arrows point at the number of the rule employed.)

\smallskip

${\rm A \, L \, L \, L \, A \, L \, S \, 1 \, L \, {\bf(3,0)} \, 3}$~
(${\it stack} = \emptyset$)~ $\rar^{(2)}$

${\rm A \, L \, L \, L \, A \, L \, S \, 1 \, {\bf(2,0)} \, L \, 3}$~
(${\it push~}  {\bf(2,0)}$)~ $\rar^{(5)}$

${\rm A \,L \, L \, L \, A \, L \, S \, {\bf(1,1)} \, 1 \, L \, 3}$~ $\rar^{(4)}$

${\rm A \, L \, L \, L \, A \, L \, {\bf(1,1)} \, S \, 1 \, L \, 3}$~ $\rar^{(2)}$

${\rm A \, L \, L \, L \, A \, {\bf(0,1)} \, L \, S \, 1 \,  L \, 3}$~ $\rar^{(7)}$

${\rm A \, L \, L \, L \, {\bf(0,0)} \, A \, L \, S \, 1 \,  L \, 3}$~  (${\it pop~} {\bf(2,0)}$)~ $\rar^{(9a)}$

${\rm A \, L \, L \, L \, {\bf(2,0)} \, A \, L \, S \, 1 \,  L \, 3}$~   $\rar^{(2)}$

${\rm A \, L \, L \, {\bf(1,0)} \, L \, A \, L \, S \, 1 \,  L \, 3}$~  $\rar^{(2)}$

${\rm A \, L \, {\bf(0,0)} \, \underline{\rule[-0.1em]{0em}{1em} L \,  L \, A \, L \, S \, 1 \, L \, 3}}$~\,
$\rar^{(9b)}$ (${\it stack} = \emptyset$: {\it stop})

 \hspace{1.5cm} {\rm L}-block of $3$
\end{Exa}

It now follows that the final label 3 on the path ${\rm A \, L \, L \, L \, A \, L \, S \, 1 \, L \, 3} $ is bound to the second label $\rmL$ on the left.

\section{Further Work and Acknowledgements}\label{SecFur}

There is much more to be said about the notion of expanding $\beta$-reduction as presented in Section~\ref{SecExp}. We intend to do that in a forthcoming paper, including conventional proofs, theorems on expanding $\beta$-reduction, and a connection between weak and strong normalization for this reduction.

Since 2021, one of the authors (Ferruccio Guidi) is formalizing and checking all proofs by means of the theorem prover Matita (\cite{Asp}).

----------------------------------------------------------------------------------------------------------------

I, Rob Nederpelt, express special thanks to Vincent van Oostrom for his interest in a pre-version of this paper. I also thank Herman Geuvers for encouraging remarks and a thorough review of an earlier version of this paper.

I, Ferruccio Guidi, would like to dedicate the results presented in these pages and those to come to Anyelis Marielbys Parra Pire, a special friend whose constant closeness accompanied me in the development of this work.

Both authors thank anonymous referees for their careful reading, leading to corrections and improvements.

\bibliographystyle{eptcs}
\bibliography{festschrift}

\end{document}